\newtheorem{lem}{Lemma}[section]
\newcommand{\e}{\mathrm{e}}
\renewcommand{\i}{\mathrm{i}}
\renewcommand{\d}{\mathrm{d}}
\newcommand{\ds}{\displaystyle}
\newcommand{\dsfrac}{\ds\frac}
\renewcommand{\(}{\left(}
\renewcommand{\)}{\right)}
\newcommand{\ol}{\overline}
\newenvironment{Proof}
    {\par\noindent{\bf Proof }}
    {\hfill$\scriptstyle\blacksquare$\vskip1cm}
\begin{document}

\begin{frontmatter}



\title{Riemann-Hilbert problem for Camassa-Holm equation with step-like initial data}


\author{Alexander Minakov}

\address{Mathematical Division, B. Verkin Institute for
Low Temperature Physics\\ and Engineering
of the National Academy of Science of Ukraine\\
47 Lenin Avenue, Kharkiv, 61103, Ukraine
\\
\vskip3mm
Doppler Institute for Mathematical Physics and Applied Mathematics,\\ Czech Technical University in Prague\\
Brehova 7, 11519 Prague, Czech Republic
\\
\vskip3mm Department of
Physics, Faculty of Nuclear Science and Physical
Engineering, \\Czech Technical University in Prague\\
Pohranicni 1288/1, D$\check{e}\check{c}\acute{i}$n, Czech
Republic\\\vskip3mm E-mail: minakov.ilt@gmail.com }

\begin{abstract}
The Cauchy problem for the Camassa -- Holm equation with step-like
initial conditions is reformulated as a Riemann -- Hilbert
problem. Then the initial value problem solution is obtained then
in a parametric form from the Riemann -- Hilbert problem solution.
\end{abstract}

\begin{keyword}
Camassa-Holm equation \sep Riemann-Hilbert problem \sep step-like
initial data
 \MSC[2010] 37K10 \sep 37K15 \sep 37K40 \sep 35B40 \sep 37K05

\end{keyword}

\end{frontmatter}

\section{Introduction}
An inverse scattering approach, based on an appropriate Riemann --
Hilbert problem formulation, is developed for the step-like
initial value problem for the Camassa -- Holm (CH) equation on the
line, whose form is
\begin{equation}\label{CH}u_t-u_{txx}+2\omega
u_x+3uu_x=2u_xu_{xx}+uu_{xxx},\quad -\infty<x<\infty, \ t>0,
\end{equation}
\begin{equation}\label{init_cond}u(x,0)=u_0(x).
\end{equation}
Here $u_0(x)$ is a step-like function, that is $u_0(x)\to c_l$ as
$x\to-\infty$ and $u_0(x)\to c_r$ as $x\to+\infty,$ where $c_l$,
$c_r$ are some real constants. We consider real-valued classical
solutions $u(x,t)$ of the CH equation (\ref{CH}), which rapidly
tend to their limits as $x\rightarrow\pm\infty$, that is for any
$T\geq 0$
\[\max\limits_{0\leq t\leq
T}\int\limits_{-\infty}^{+\infty}\(1+|x|\)^{l+1}\times\]
\begin{equation}\label{conditions
u(x,t)}\times\(|m(x,t)-c_lH(-x)-c_rH(x)|+|m_x(x,t)|+|m_{xx}(x,t)|\)\d
x<\infty,\end{equation} where $l\geq0$ is some integer and
$H(x)=\left\{\begin{array}{l}1,\quad x\geq0,\\0,\quad
x<0\end{array}\right.$ is the Heaviside function.

 The Camassa-Holm equation describes the
unidirectional propagation of shallow water waves over a flat
bottom (R.~Camassa, D.~Holm, J.~Hyman, \cite {Camassa Holm 1993},
\cite{Camassa Holm Hyman 1994}) as well as axially symmetric waves
in a hyper-elastic rod (H.~Dai \cite{Dai}). Firstly it was found
using the method of recursion operators as a bi-Hamiltonian
equation with an infinite number of conserved functionals
(A.~Fokas, B.~Fuchssteiner \cite{Fokas Fuchssteiner 1981}).

For the case of vanishing initial data $c_l=c_r=0$ the
Riemann--Hilbert reformulation of CH equation and further
asymptotic analysis was done by D.Shepelsky with coauthors
\cite{Shep2007}, \cite{Shep_2008},\cite{Shep_2008_halh-line},
\cite{Shep_Its_2010} by transforming Lax pair into suitable for
asymptotical analysis form. Alternative approach based directly on
the scattering theory for the underlying Sturm -- Liouville
operator was developed by G.Teschl and A.Kostenko with coauthors
\cite{Kostenko_Shepelsky_Teschl_2009}.

Here we derive approach based on
\cite{Kostenko_Shepelsky_Teschl_2009}, but the approach based on
\cite{Shep2007} can be developed as well. We suppose that there
exists a classical solution of the Cauchy problem
(\ref{CH}),(\ref{init_cond}), and that this solution satisfies the
following condition for all values of time $t\geq0$:
\begin{equation}\label{m+omega} \dsfrac{m(x,t)+\omega}{c_r+\omega}>0,\end{equation} where
$$m(x,t):=u(x,t)-u_{xx}(x,t)$$ is the   so-called "momentum"
variable. Also we suppose that $\frac{c_l+\omega}{c_r+\omega}>0$.

The Camassa -- Holm equation (\ref{CH}), quantity
$\dsfrac{c_l+\omega}{c_r+\omega}$ and function
$\dsfrac{m(x,t)+\omega}{c_r+\omega}$ are invariant (see
\cite{Grunert_Holden_Raynaud_2011}, p.4) under the transformation
$$(\omega,\ u(x,t))\mapsto(\alpha\omega-\beta,\ v(x,t)=\alpha\,
u(x-\beta t,\alpha t)+\beta),$$ so there are only two
nonequivalent cases: $(c_r=0,\omega=1,c_l=:c>0)$ and
$(c_r=0,\omega=1,c_l\in(-1,0)).$ We restrict ourselves to the
first case, moreover, we will not assume that $\omega=1,$ so we
have $\(c_r=0, \dsfrac{c}{\omega}>0\).$

Our goal is to develop the inverse scattering approach to the CH
equation with step-like initial data, in view of its further
application for studying the long-time asymptotics. In section
\ref{sect Lax pair, Liouville transformation} we recall basic
facts about Camassa--Holm equation, introduce Jost solutions,
prove lemma \ref{lemma generic case} which guarantee boundedness
of the right transmission coefficient at the edge of spectrum at
the point $\dsfrac{\i}{2}\sqrt{\dsfrac{c}{c+\omega}}.$ In sections
\ref{Sect_RH_problem_1} and \ref{Sect_RH_problem_2} we state two
Riemann--Hilbert problems, for the right and left spectral
parameter, respectively. To deal with RH problem which is
continuous up to conjugation contour, in section
\ref{Sect_RH_problem_1} we assume $m(x,0)\leq x$ for all $x$. In
section \ref{Sect_RH_problem_2} to this end we suppose that $l=1$
in the formula (\ref{conditions u(x,t)}). In section \ref{sect
reconstruction of potential} Camassa-Holm solution is
reconstructed via RH problem solution. Two conservative laws for
step-like solutions of CH equation are obtained ((\ref{wp}),
(\ref{H_0[u]})).

\section{
Lax pair for CH equation, Liouville transformation, Jost solutions
and spectral funcions}\label{sect Lax pair, Liouville
transformation}
 In this section we derive a vector Riemann -- Hilbert problem
directly from the scattering theory for the Sturm -- Liouville
operator with a step-like potential \cite{Cohen_Kappeler},
\cite{Bazargan_2008}. We begin by recalling some required results
for the Camassa -- Holm equation
\cite{Constantin_2001},\cite{Constantin_Gerdikov_Ivanov},
\cite{Shep2007},\cite{Kostenko_Shepelsky_Teschl_2009}.

The starting point for our considerations is the Lax
representation: the CH equation is the compatibility condition of
two linear equations

\begin{subequations}\label{Lax_representation_CH}
\begin{eqnarray}\label{Shturm_x-eq}
\dsfrac{\omega}{m+\omega}\(-\,\varphi''_{xx}+\dsfrac{1}{4}\
\varphi\)=\lambda\,\varphi\ ,
\\\label{Shturm_t-eq}
\varphi_t=-\(\dsfrac{\omega}{2\lambda}+u\)\varphi'_x+\frac{u_x}{2}\varphi\
.
\end{eqnarray}
\end{subequations}
\noindent Assuming that $\frac{m+\omega}{\omega}\geq 0,$ equation
(\ref{CH}) can be equivalently  written as
\begin{equation}\label{CH alternative}
\(\sqrt{\frac{m+\omega}{\omega}}\)_t=-\(u\sqrt{\frac{m+\omega}{\omega}}\)_x\
.
\end{equation}

 \noindent Introduce Liouville transform

\begin{equation}\label{y_x}
y=y(x,t)\equiv x-\int\limits_x^{+\infty}\(\sqrt{\dsfrac{m(\tilde
x,t)+\omega}{\omega}\ }-1\)\d \tilde x,
\end{equation}
\begin{equation}\label{psi_varphi}
\psi(y)=\sqrt[4]{\dsfrac{m+\omega}{\omega}\ }\ \varphi(x).
\end{equation}

\noindent There is a conservative law
\begin{equation}\nonumber\wp=x\(\sqrt{\dsfrac{c+\omega}{\omega}}-1\)-c\,\sqrt{\dsfrac{c+\omega}{\omega}\ }\
t+\int\limits_{-\infty}^x\(\sqrt{\dsfrac{m(\xi,t)+\omega}{\omega}}-\sqrt{\dsfrac{c+\omega}{\omega}}\)\d\xi+
\end{equation}\begin{equation}\label{wp}\hfill+\int\limits_x^{+\infty}\(\sqrt{\dsfrac{m(\xi,t)+\omega}{\omega}}-1\)\d\xi
\end{equation}
which can be checked directly.

\noindent Due to the conservative law (\ref{wp}) $y$ can be also
expressed in the following way:
\begin{equation}\label{y_x_-}
y=x-\int\limits_x^{+\infty}\(\sqrt{\dsfrac{m+\omega}{\omega}\
}-1\)\d
r=\end{equation}\[=-\wp+\sqrt{\dsfrac{c+\omega}{\omega}}\left[x-ct+\int\limits_{-\infty}^x\(\sqrt{\dsfrac{m+\omega}{c+\omega}}-1\)\d
r\right],
\]
so $y\to\pm\infty$ as $x\to\pm\infty$ and vise versa.

Spectral problem (\ref{Shturm_x-eq}) now reads as follows:

\begin{equation}\label{Shturm_psi_y-eq}
-\psi''_{yy}(y,t;k)+v(y,t)\psi(y,t;k)=k^2\psi(y,t;k),
\end{equation}
or, equivalently,
\begin{equation}\label{Shturm_psi_y-eq_for z}
-\psi''_{yy}(y,t;k)+\(v(y,t)+\frac{c}{4(c+\omega)}\)\psi(y,t;k)=z^2\psi(y,t;k).
\end{equation}

Here
\begin{equation}\label{lambda_k}
\lambda=:k^2+\frac{1}{4}=:z^2+\frac{\omega}{4(c+\omega)},
\end{equation}
and
\begin{equation}\label{vqT} v(y,t)=-
\dsfrac{m}{4(m+\omega)}+\dsfrac{\omega}{4}\dsfrac{m_{xx}}{\(m+\omega\)^2}
-\dsfrac{5\omega}{16}\dsfrac{m_x^2}{(m+\omega)^3},
\end{equation}

\noindent so $v(y,t)\to0$ as $y\to+\infty,$ and
$v(y,t)\to-\dsfrac{c}{4(c+\omega)}$ as $y\to-\infty.$ From our
assumption (\ref{conditions u(x,t)}) we get that
$v(y,t)+\frac{c}{4(c+\omega)}\in L_1(\mathbb{R}_-, (1+|y|)\d y)$
and $v(y,t)\in L_1(\mathbb{R}_+, (1+|y|)\d y).$ We will consider
$z$ as a function of $k,$
$$z=z(k)=\sqrt{k^2+\frac{c}{4(c+\omega)}},$$ where the branch of the cut
across the segment $\(\i\ \sqrt{\frac{c}{4(c+\omega)}},-\i\
\sqrt{\frac{c}{4(c+\omega)}}\)$ is fixed by the condition $z\sim
k\ $ as $\ k\rightarrow\infty.$

Spectral problems (\ref{Shturm_psi_y-eq}),
(\ref{Shturm_psi_y-eq_for z}) are well studied (see
{\cite{Kappeler_1985}}, \cite{Bazargan_2008},
\cite{Egorova_2008}), and all results known for them can be
readily applied.

\begin{lem}\label{lemma Jost solutions_Sturm-Liouville}
There exist two Jost solutions $\varphi_{\pm}(x,t;k)$ which solve
the system of differential equations (\ref{Lax_representation_CH})
and satisfy
\begin{equation}\label{asymptotics varphi - x}
\lim\limits_{x\rightarrow-\infty}\sqrt[4]{\frac{c+\omega}{\omega}}\exp\left\{\i
z\sqrt{\frac{c+\omega}{\omega}}\(x-\(c+\frac{\omega}{2\lambda}\)t\)\right\}\varphi_-(x,t;k)=1,
\end{equation}
\begin{equation}\label{asymptotics varphi + x}
\lim\limits_{x\rightarrow+\infty}\e^{-\i kx+\frac{\i\omega
kt}{2\lambda}}\varphi_+(x,t;k)=1.
\end{equation}
Function $\ \exp\left\{\frac{-\i zt\sqrt{(c+\omega)\omega\
}}{2\lambda}\right\}\varphi_-(x,t;k)\ $ is analytic for $\Im
z(k)>0$ and continuous for $\Im z(k)\geq0$,\\
function $\ \exp\left\{\frac{\i
kt\omega}{2\lambda}\right\}\varphi_+(x,t;k)\ $ is analytic for
$\Im k>0$ and continuous for $\Im k\geq0.$ As
$k\rightarrow\infty$, $\Im k\geq0,$
we have\\\\
$\varphi_-(x,t;k)=\sqrt[4]{\dsfrac{\omega}{m+\omega}}\ \cdot
$\begin{equation}\nonumber\cdot\exp\left\{-\i z
\sqrt{\frac{c+\omega}{\omega}}\(x+\int\limits_{-\infty}^x\(\sqrt{\frac{m+\omega}{c+\omega}}-1\)\d\tilde
x-\(c+\frac{\omega}{2\lambda}\)t\)\right\}\cdot
\end{equation}
\begin{equation}\label{asymptotics varphi - k}\cdot\(1-\frac{1}{2\i k}\int\limits_{-\infty}^y\(v(\tilde y,t)+\frac{c}{4(c+\omega)}\)
\d\tilde y+\mathrm{O}\(k^{-2}\)\),
\end{equation}

\begin{equation}\label{asymptotics varphi + k}
\varphi_+(x,t;k)=\sqrt[4]{\frac{\omega}{m+\omega}} \exp\left\{\i
k\(x-\int\limits_{x}^{+\infty}\(\sqrt{\frac{m+\omega}{\omega}}-1\)\d\tilde
x\)-\frac{\i\omega kt}{2\lambda}\right\}\cdot \end{equation}
\begin{equation}\cdot
\(1-\frac{1}{2\i k}\int\limits^{+\infty}_y v(\tilde y,t) \d\tilde
y+\mathrm{O}\(k^{-2}\)\).
\end{equation}

\noindent Moreover, the following relations are satisfied:\\
$\varphi_{\pm}(x,t;k)=\overline{\varphi_{\pm}(x,t;-\overline{k})},$\\
$\varphi_-(x,t;k\pm0)=\overline{\varphi_-(x,t;\overline{k}\mp0)},\quad
k\in\(\i\ \sqrt{\dsfrac{c}{4(c+\omega)}}\ ,\ -\i\
\sqrt{\dsfrac{c}{4(c+\omega)}}\);$
\\\\\\
$\varphi_+(x,t;k)=\overline{\varphi_+(x,t;k)},\quad k\in\(\i\
\sqrt{\dsfrac{c}{4(c+\omega)}}\ ,0\);$
\\
$\varphi_-(x,t;k\pm0)=\varphi_-(x,t;\overline{k}\pm0),\quad
k\in\(\i\ \sqrt{\dsfrac{c}{4(c+\omega)}}\ ,\ -\i\
\sqrt{\dsfrac{c}{4(c+\omega)}}\).$
\end{lem}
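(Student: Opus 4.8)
The strategy is to transfer everything to the $y$-variable, where Lemma \ref{lemma Jost solutions_Sturm-Liouville} becomes a statement about the standard Sturm--Liouville (Schr\"odinger) operator with step-like potential $v(y,t)$ appearing in (\ref{Shturm_psi_y-eq}), and then pull the results back through the Liouville transform (\ref{y_x})--(\ref{psi_varphi}) and the time-evolution (\ref{Shturm_t-eq}). The first step is to record that, by (\ref{vqT}) and assumption (\ref{conditions u(x,t)}), one has $v(\cdot,t)\in L_1(\mathbb{R}_+,(1+|y|)\d y)$ and $v(\cdot,t)+\frac{c}{4(c+\omega)}\in L_1(\mathbb{R}_-,(1+|y|)\d y)$, so the classical scattering theory of \cite{Kappeler_1985}, \cite{Bazargan_2008}, \cite{Egorova_2008} for (\ref{Shturm_psi_y-eq})--(\ref{Shturm_psi_y-eq_for z}) applies verbatim: there exist Jost solutions $\psi_+(y,t;k)$ with $\psi_+\sim \e^{\i ky}$ as $y\to+\infty$, analytic in $k$ for $\Im k>0$ and continuous up to $\Im k=0$, and $\psi_-(y,t;z)$ with $\psi_-\sim\e^{-\i zy}$ as $y\to-\infty$, analytic in $z$ for $\Im z>0$ and continuous up to $\Im z=0$ (the shift $\frac{c}{4(c+\omega)}$ in (\ref{Shturm_psi_y-eq_for z}) is exactly what makes $z$, not $k$, the natural variable on the left). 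Their large-$k$ (resp. large-$z$) expansions are the textbook ones, $\psi_+(y,t;k)=\e^{\i ky}\bigl(1-\frac{1}{2\i k}\int_y^{+\infty}v(\tilde y,t)\d\tilde y+\ord k^{-2}\)\)$ and similarly for $\psi_-$ with $v+\frac{c}{4(c+\omega)}$ and $\int_{-\infty}^y$.

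Next I would define $\varphi_\pm(x,t;k)$ by inverting (\ref{psi_varphi}), namely $\varphi_\pm(x,t;k)=\sqrt[4]{\frac{\omega}{m+\omega}}\,\psi_\pm(y(x,t),t;\cdot)$ times the scalar time-factor forced by (\ref{Shturm_t-eq}). Concretely, one checks that if $\psi$ solves (\ref{Shturm_psi_y-eq}) then $\varphi$ defined by (\ref{psi_varphi}) solves the $x$-equation (\ref{Shturm_x-eq}) — this is the standard computation producing the potential (\ref{vqT}) — and then one determines the $t$-dependence by demanding (\ref{Shturm_t-eq}); because the coefficient $-(\frac{\omega}{2\lambda}+u)$ in (\ref{Shturm_t-eq}) tends to $-(\frac{\omega}{2\lambda}+c)$ as $x\to-\infty$ and to $-\frac{\omega}{2\lambda}$ as $x\to+\infty$, compatibility forces a time-exponential $\exp\{\i z\sqrt{(c+\omega)\omega}\,t/(2\lambda)\}$ in the $\varphi_-$ normalization and $\exp\{-\i k\omega t/(2\lambda)\}$ in the $\varphi_+$ normalization, which is precisely why the analyticity statement is asserted for the products $\exp\{\tfrac{-\i zt\sqrt{(c+\omega)\omega}}{2\lambda}\}\varphi_-$ and $\exp\{\tfrac{\i kt\omega}{2\lambda}\}\varphi_+$ rather than for $\varphi_\pm$ themselves. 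The asymptotic normalizations (\ref{asymptotics varphi - x})--(\ref{asymptotics varphi + x}) then follow from $y\to\pm\infty$ as $x\to\pm\infty$ together with the explicit relation (\ref{y_x_-}): substituting $y = x + \int_{-\infty}^x(\sqrt{\tfrac{m+\omega}{c+\omega}}-1)\d r$-type expressions into $\e^{\pm\i z y}$ reproduces the arguments displayed in (\ref{asymptotics varphi - k})--(\ref{asymptotics varphi + k}), and the $\sqrt[4]{\cdot}$ prefactor comes straight from (\ref{psi_varphi}).

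For the large-$k$ expansions (\ref{asymptotics varphi - k})--(\ref{asymptotics varphi + k}) I would simply substitute the Sturm--Liouville expansions for $\psi_\pm$ into (\ref{psi_varphi}), rewrite $y$ via (\ref{y_x}) and (\ref{y_x_-}), and note that $z=z(k)=k\sqrt{1+\tfrac{c}{4(c+\omega)k^2}}=k+\ord k^{-1}\)$ so that to the order stated one may freely replace $z$ by $k$ inside the $\ord k^{-2}\)$ error while keeping it exact in the leading exponential; the $-\frac{1}{2\i k}\int v$ corrections are inherited directly. Finally, the symmetry relations are obtained from the invariance of the differential equations (\ref{Lax_representation_CH}) under $k\mapsto-\overline k$ (which leaves $\lambda=k^2+\tfrac14$ real-analytic-friendly and fixes the real potential), together with the branch structure of $z(k)$ on the cut $(\i\sqrt{\tfrac{c}{4(c+\omega)}},-\i\sqrt{\tfrac{c}{4(c+\omega)}})$: crossing the cut swaps the two determinations $z\mapsto-z$, which is what produces the $\pm 0$ boundary-value identities for $\varphi_-$, while $\varphi_+$ depends on $k$ only and so is real on the corresponding interval of the imaginary axis. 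The main obstacle I anticipate is not any single estimate but the bookkeeping of the three intertwined changes of variable — the $x\leftrightarrow y$ Liouville map, the $k\leftrightarrow z$ branch, and the scalar time-factor from (\ref{Shturm_t-eq}) — making sure the exponential arguments in (\ref{asymptotics varphi - x}), (\ref{asymptotics varphi - k}) match once (\ref{y_x_-}) and (\ref{lambda_k}) are inserted; the analyticity and $L_1$-with-weight inputs are entirely standard and can be cited.
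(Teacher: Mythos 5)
Your proposal follows essentially the same route as the paper: there, too, $\varphi_\pm$ are defined as the Liouville-transform pullbacks of the Cohen--Kappeler Jost solutions $\psi^K_\pm$ of (\ref{Shturm_psi_y-eq}), multiplied by the explicit factors $\exp\{-\i\wp z+\i\sqrt{(c+\omega)\omega}\,zt/(2\lambda)\}$ and $\exp\{-\i\omega kt/(2\lambda)\}$, and all analyticity, large-$k$ and symmetry statements are imported from the step-like scattering theory exactly as you describe, with (\ref{y_x_-}) doing the bookkeeping in the exponents. The one point you treat heuristically --- that the so-normalized $\varphi_\pm$ actually satisfy the $t$-equation (\ref{Shturm_t-eq}) --- is likewise not proved inside this lemma in the paper but is deferred to the following lemma via a Marchenko-type argument, so your plan coincides with the paper's.
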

\begin{proof}
This is straightforward from the corresponding results for the
spectral problem (\ref{Shturm_psi_y-eq}) (see
\cite{Cohen_Kappeler}) by virtue of the Liouville transform
(\ref{y_x}), (\ref{psi_varphi}). Just notice
\begin{equation}\label{Jost solutions and Kappeler Jost solutions}
\begin{array}{ccc}\varphi_-(x,t;k)=\sqrt[4]{\frac{\omega}{m+\omega}}\exp\left\{-\i\wp
z+\frac{\i\sqrt{(c+\omega)\omega\ }\ zt}{2\lambda}\right\}
\psi^K_-(y,t;k),\\
\varphi_+(x,t;k)=\sqrt[4]{\frac{\omega}{m+\omega}}\ \
\e^{\frac{-\i\omega kt}{2\lambda}} \
\psi^K_+(y,t;k),\end{array}\end{equation}
 where $\wp$ is a conserved quantity
of CH equation (\ref{wp}) and $\psi^K_{\pm}(y,t;k)$ are the Jost
solutions of (\ref{Shturm_psi_y-eq}) determined by their
aymptotics
\begin{equation}\label{Kappeler Jost solutions}\begin{array}{ccc}
\e^{+\i z y}\ \psi^K_-(y,t;k)\rightarrow1\ \textrm { as }\
y\rightarrow-\infty, \quad \Im z\geq0\\
\e^{-\i k y}\ \psi^K_+(y,t;k)\rightarrow1\ \textrm { as }\
y\rightarrow+\infty,\quad \Im k\geq0.
\end{array}
\end{equation}

\end{proof}

%
%
%



\begin{lem} Jost solutions $\varphi_{\pm}(x,t;k)$
determined by (\ref{asymptotics varphi - x}), (\ref{asymptotics
varphi + x})
 satisfy $t$-equation
(\ref{Shturm_t-eq}).
\end{lem}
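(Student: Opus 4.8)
The plan is to use that the Camassa--Holm equation (\ref{CH}) is exactly the compatibility condition of the pair (\ref{Lax_representation_CH}): knowing that $\varphi_\pm$ solves the $x$-equation (\ref{Shturm_x-eq}), the $t$-equation (\ref{Shturm_t-eq}) can only fail by an additive solution of (\ref{Shturm_x-eq}), and that remaining freedom is removed by the prescribed behaviour of $\varphi_\pm$ as $x\to\pm\infty$. Throughout, since $u$ is a classical solution of (\ref{CH}), the Jost solutions are smooth enough in $(x,t)$ to justify the manipulations.

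First I would do the algebraic half. Write (\ref{Shturm_x-eq}) as $\mathcal{L}\varphi=0$ with $\mathcal{L}:=-\partial_x^2+\tfrac14-\lambda\,\tfrac{m+\omega}{\omega}$. Differentiating in $t$ with $\lambda$ fixed gives $\mathcal{L}\varphi_t=\tfrac{\lambda}{\omega}\,m_t\,\varphi$. A direct computation --- using $\varphi_{xx}=\bigl(\tfrac14-\lambda\tfrac{m+\omega}{\omega}\bigr)\varphi$ to remove second derivatives, together with the form $m_t=-um_x-2u_x(m+\omega)$ of (\ref{CH}) --- shows that
$$\Phi_\pm:=(\varphi_\pm)_t+\left(u+\frac{\omega}{2\lambda}\right)(\varphi_\pm)_x-\frac{u_x}{2}\,\varphi_\pm$$
again satisfies $\mathcal{L}\Phi_\pm=0$, i.e.\ solves (\ref{Shturm_x-eq}) with the same $\lambda$. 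The assertion of the lemma is precisely that $\Phi_\pm\equiv0$. This step is routine; the cancellations that occur are exactly those that make the zero-curvature representation work.

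It then remains to pin down the solution $\Phi_+$ of the second-order equation (\ref{Shturm_x-eq}) by its behaviour as $x\to+\infty$ (and $\Phi_-$ as $x\to-\infty$). Take $\Phi_+$. As $x\to+\infty$ one has $m,u,u_x\to0$ and, by (\ref{asymptotics varphi + x}) together with (\ref{conditions u(x,t)}) --- which, via the Volterra integral equation satisfied by the Jost solutions and the relations (\ref{Jost solutions and Kappeler Jost solutions}), also controls the $x$- and $t$-derivatives of the correction --- one may write $\varphi_+=\e^{\i kx-\i\omega kt/(2\lambda)}(1+\rho)$ with $\rho,\rho_x,\rho_t\to0$. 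Since the phase $\e^{\i kx-\i\omega kt/(2\lambda)}$ is annihilated by $\partial_t+\tfrac{\omega}{2\lambda}\partial_x$, the principal parts of $(\varphi_+)_t$ and $\tfrac{\omega}{2\lambda}(\varphi_+)_x$ cancel in $\Phi_+$, while the leftover $u(\varphi_+)_x-\tfrac{u_x}{2}\varphi_+$ is $\mathrm{o}\bigl(\e^{\i kx}\bigr)$ because $u,u_x\to0$; hence $\e^{-\i kx+\i\omega kt/(2\lambda)}\Phi_+\to0$, i.e.\ $\Phi_+=\mathrm{o}\bigl(\e^{\i kx}\bigr)$. But for $\Im k>0$ the equation (\ref{Shturm_x-eq}) reduces at $+\infty$ to $\varphi_{xx}=-k^2\varphi$ with an integrable perturbation, so its two solutions behave as $\e^{\pm\i kx}$ and a solution that is $\mathrm{o}\bigl(\e^{\i kx}\bigr)$ must vanish identically (equivalently: the $x$-independent Wronskian $W[\Phi_+,\varphi_+]$ is $0$ by these asymptotics, so $\Phi_+=c(t,k)\varphi_+$, and then $c=\lim_{x\to+\infty}\Phi_+/\varphi_+=0$). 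Thus $\Phi_+\equiv0$ for $\Im k>0$, and for $\Im k\ge0$ by continuity. The case of $\Phi_-$ is identical, carried out as $x\to-\infty$: there $m\to c$, $u\to c$, $u_x\to0$, the coefficient in (\ref{Shturm_t-eq}) tends to $c+\tfrac{\omega}{2\lambda}$, equation (\ref{Shturm_x-eq}) becomes $\varphi_{xx}=-\tfrac{c+\omega}{\omega}z^2\varphi$ with solutions behaving as $\e^{\pm\i z\sqrt{(c+\omega)/\omega}\,x}$ ($\varphi_-$ being the recessive one), and the phase in (\ref{asymptotics varphi - x}) is annihilated by $\partial_t+\bigl(c+\tfrac{\omega}{2\lambda}\bigr)\partial_x$.

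The main obstacle is not the algebra of the first step but the analytic bookkeeping of the second: one must know that the correction $\rho$ in the Jost asymptotics, together with its $x$- and $t$-derivatives, tends to $0$ at the appropriate spatial infinity. This is where the weighted-$L_1$ decay in (\ref{conditions u(x,t)}), the transformation (\ref{Jost solutions and Kappeler Jost solutions}) to the Sturm--Liouville Jost solutions $\psi^K_\pm$, and the fact that (\ref{CH}) lets one express $m_t,m_{xt},m_{xxt}$ --- hence the $t$-derivative of the potential $v$ in (\ref{vqT}) --- algebraically in decaying quantities, all enter. A minor additional point is the continuity extension in $k$ to the boundaries of the analyticity domains and to those $k$ at which $\varphi_+,\varphi_-$ become linearly dependent, which is harmless because all functions involved are continuous there.
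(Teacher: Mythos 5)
Your proposal is correct, and its first (algebraic) half coincides with the paper's: both show that $\Phi_\pm=(\varphi_\pm)_t+\bigl(u+\tfrac{\omega}{2\lambda}\bigr)(\varphi_\pm)_x-\tfrac{u_x}{2}\varphi_\pm$ again solves the $x$-equation, using the Camassa--Holm equation in the form $m_t=-um_x-2u_x(m+\omega)$. Where you genuinely diverge is in killing $\Phi_\pm$. The paper follows Marchenko's classical scheme: it works first at real $k$ (resp.\ real $z$), writes $\Phi=A(t;k)\varphi+B(t;k)\hat\varphi$ with $\hat\varphi=\overline{\varphi_\pm}$ as the second basis solution, passes to the Liouville variable $\psi$, substitutes the asymptotics of $\psi$, $\psi_y$, $\psi_t$ at the relevant spatial infinity, and uses that $A+B\,\e^{\mp2\i ky}(\cdots)\to0$ forces $A\equiv B\equiv0$ because of the oscillation; the result is then extended to $\Im k\geq0$, $\Im z\geq0$ by analyticity. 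You instead argue directly in the open half-planes $\Im k>0$, $\Im z>0$: $\Phi_\pm$ is $\mathrm{o}$ of the recessive solution at the appropriate infinity, hence vanishes identically (via the constancy of $W[\Phi_\pm,\varphi_\pm]$ or the dominant/recessive dichotomy), and the boundary $\Im k=0$, $\Im z=0$ is recovered by continuity in $k$. The trade-off: the paper needs only boundary asymptotics of the Jost solutions and their $t$-derivatives plus an explicit second solution, with analytic continuation doing the rest, while your route dispenses with the second solution and the oscillation argument at the cost of controlling $\rho$, $\rho_x$, $\rho_t$ in the open half-plane (actually easier there, thanks to exponential decay of the kernels in the Volterra equation) and of a continuity argument at the boundary, including the exceptional points $k=0$, $z=0$. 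Both hinge on the same analytic bookkeeping --- that the correction in the Jost asymptotics and its $x$- and $t$-derivatives vanish at the relevant infinity --- which you correctly identify as the real content and which the paper also asserts rather than proves in detail; so I see no genuine gap, only a different and equally legitimate uniqueness step.
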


\begin{proof}

%

We follow the well-known idea from \cite{Marchenko_1972_1986},
chapter 4, paragraph 2. First the statement is to be proved for
real $k$ and then can be extended by analyticity to $\Im k\geq0$,
$\Im z\geq0$, respectively. Let $\varphi(x,t;k)$ be one of the
functions $\varphi_{\pm}(x,t;k).$
It is straightforward that if $\varphi(x,t;k)$ satisfy (\ref{CH})
and (\ref{Shturm_x-eq}), then
\begin{equation}\nonumber
\(\varphi_t+\(\frac{\omega}{2\lambda}+u\)\varphi_x-\frac{u_x}{2}\,\varphi\)_{xx}=
\end{equation}
\begin{equation}\nonumber\hskip4cm=\(\dsfrac{1}{4}-\dsfrac{\lambda(m+\omega)}{\omega}\)
\(\varphi_t+\(\frac{\omega}{2\lambda}+u\)\varphi_x-\frac{u_x}{2}\,\varphi\),
\end{equation}
i.e.
$\varphi_t+\(\frac{\omega}{2\lambda}+u\)\varphi_x-\frac{u_x}{2}\,\varphi$
satisfies the same differential equation (\ref{Shturm_x-eq}) as
$\varphi$. If we define by $\hat\varphi$ another solution of
(\ref{Shturm_x-eq}) such that $\varphi$ and $\hat\varphi$ are
linear independent, then there exist independent on $x$ functions
$A(t;k)$, $B(t;k)$ such that
\begin{equation}\nonumber
\varphi_t(x,t;k)+\(\frac{\omega}{2\lambda}+u(x,t)\)\varphi_x(x,t;k)-\frac{u_x(x,t)}{2}\,\varphi(x,t;k)=
\end{equation}
\begin{equation}\nonumber \qquad\qquad\qquad=A(t;k)\varphi(x,t;k)+ B(t;k)\hat\varphi(x,t;k).
\end{equation}

\noindent For $\psi(y,t;k)=\sqrt[4]{\dsfrac{m+\omega}{\omega}\ }\
\varphi(x,t;k), $ $\quad
\hat\psi(y,t;k)=\sqrt[4]{\dsfrac{m+\omega}{\omega}\ }\
\hat\varphi(x,t;k)$ \\the last equation reads as
\begin{equation}\nonumber
\psi_t+\dsfrac{\omega}{2\lambda}\sqrt{\dsfrac{m+\omega}{\omega}}\psi_y-
\(\dsfrac{um_x}{4(m+\omega)}+\dsfrac{m_t}{4(m+\omega)}+\dsfrac{u_x}{2}+\dsfrac{\omega
m_x}{8\lambda(m+\omega)}\)\psi=\end{equation}
\begin{equation}\label{Vypolnenie_t_eq}\hskip10cm =A\psi+B\hat\psi.
\end{equation}

a) Now let us take $\varphi(x,t;k)=\varphi_+(x,t;k)$ and
$\hat\varphi(x,t;k)=\overline{\varphi_+(x,t;k)}.$ Function
$\psi(x,t;k)$ defined by (\ref{psi_varphi}) has the following
asymptotic behavior as $x\to+\infty$ for $\Im k=0:$
\begin{equation}\nonumber
\e^{-\i ky+\frac{2\,\i\, \omega kt}{4k^2+1}}\
\psi_+(y,t;k)\rightarrow1,\quad y\to+\infty,\quad \mathrm{Im\ }
k=0,
\end{equation}
\begin{equation}\nonumber
\e^{-\i ky+\frac{2\,\i\, \omega kt}{4k^2+1}}\
\psi'_{+,y}(y,t;k)\rightarrow\i k,\quad y\to+\infty,\quad
\mathrm{Im\ } k=0,
\end{equation}
\begin{equation}\nonumber
\e^{-\i ky+\frac{2\,\i\, \omega kt}{4k^2+1}}\
\psi'_{+,t}(y,t;k)\rightarrow\dsfrac{-2\i k\omega}{4k^2+1},\quad
y\to+\infty,\quad \mathrm{Im\ } k=0.
\end{equation}

Substituting these asymptotics into (\ref{Vypolnenie_t_eq}) and
taking the limit as $y\to+\infty$ we get 0 in the left-hand-side,
which is possible only if $A(t;k)\equiv0$, $B(t;k)\equiv0.$

b) Now let us take $\varphi(x,t;k)=\varphi_-(x,t;k)$ and
$\hat\varphi(x,t;k)=\overline{\varphi_-(x,t;k)}.$ Function
$\psi(y,t;k)$ defined (\ref{psi_varphi}) has the following
asymptotic behavior as $x\to+\infty$ for $\Im k=0:$
\begin{equation}\nonumber
\e^{\i \(y+\wp\)z(k)-\frac{2\, \i\, \,\sqrt{(c+\omega)\omega}\
t\,z(k)}{4k^2+1}}\ \psi_-(y,t;k)\rightarrow1,\quad
y\to-\infty,\quad \mathrm{Im\ }z(k)=0.
\end{equation}
\begin{equation}\nonumber
\e^{\i \(y+\wp\)z(k)-\frac{2\, \i\, \,\sqrt{(c+\omega)\omega}\
t\,z(k)}{4k^2+1}}\ \psi_{-,y}(y,t;k)\rightarrow-\i z(k),\
y\to-\infty,\ \mathrm{Im\ }z(k)=0.
\end{equation}
\begin{equation}\nonumber
\e^{\i \(y+\wp\)z(k)-\frac{2\, \i\, \,\sqrt{(c+\omega)\omega}\
t\,z(k)}{4k^2+1}}\
\psi_{-,t}(y,t;k)\rightarrow\frac{2\i\sqrt{(c+\omega)\omega\,}\,z(k)}{4k^2+1},\quad
y\to-\infty,\end{equation}\begin{equation}\nonumber\quad\qquad\qquad\qquad\qquad\qquad\qquad\qquad\qquad\qquad\qquad\qquad\qquad\qquad
\mathrm{Im\ }z(k)=0.
\end{equation}

Substituting these asymptotics into (\ref{Vypolnenie_t_eq}) and
taking the limit as $y\to-\infty$ we get 0 in the left-hand-side,
which is possible only if $A(t;k)\equiv0$, $B(t;k)\equiv0.$

\end{proof}

Next, we have the scattering relations
\begin{subequations}\label{scattering relations varphi}
\begin{eqnarray}
\varphi_-(x,t;k)=a_+(k)\
\overline{\varphi_+(x,t;\overline{k})}+b_+(k)\
\varphi_+(x,t;k),\quad k\in\mathbb{R}\setminus\left\{k=0\right\},
\nonumber\\\\
\varphi_+(x,t;k)=a_-(k)\
\overline{\varphi_-(x,t;\overline{k})}+b_-(k)\
\varphi_-(x,t;k),\quad z(k)\in\mathbb{R},\Im k\geq 0,\nonumber\\
\end{eqnarray}
\end{subequations}
where $a_{\pm}^{-1}(k),$ $\frac{b_{\pm}(k)}{a_{\pm}(k)}$ are the
transmission and reflection coefficients respectively. Introduce
also Wronskian
\begin{equation}\label{Wronskian definition}W(k):=W\left\{\varphi_-(x,t;k),\varphi_+(x,t;k)\right\}=\varphi_-\cdot\frac{\partial\varphi_+}{\partial
x}-\varphi_+\cdot\frac{\partial\varphi_-}{\partial x}.
\end{equation} We need the following preliminary lemmas:

\begin{lem}\label{symmetries a b}
Spectral functions $a_{\pm}(k),$ $\ b_{\pm}(k)$ have extended
domains of definition and are expressed in terms of
$\varphi_{\pm}(x,t;k)$ as follows:
\begin{enumerate}
\item $W\left\{\varphi_-(x,t;k),\ \varphi_+(x,t;k)\right\}=2\i k\,
a_+(k),\quad \Im z(k)\geq0;$ \item
$W\left\{\overline{\varphi_+(x,t;\overline{k})},\
\varphi_-(x,t;k)\right\}=2\i k\, b_+(k),\quad
z(k)\in\mathbb{R}\cap \Im k\leq0;$ \item
$W\left\{\varphi_-(x,t;k),\ \varphi_+(x,t;k)\right\}=2\i z(k)\,
a_-(k),\quad \Im z(k)\geq0;$ \item $W\left\{\varphi_+(x,t;k),\
\overline{\varphi_-(x,t;\overline{k})}\right\}=2\i z(k)\,
b_-(k),\quad z(k)\in\mathbb{R}\cap\Im k\geq0.$
\end{enumerate}
They possess the following properties:

\renewcommand{\theenumi}{\Roman{enumi}}
\begin{enumerate}
\item $\overline{a_+(-\overline{k})}=a_+(k),\quad
\Im z(k)\geq 0;$\\
$\qquad \overline{b_+(-\overline{k})}=b_+(k),\quad
z(k)\in\mathbb{R}\cap\Im k\leq0;$ \item
$\overline{a_-(-\overline{k})}=a_-(k),\quad
k\in\overline{\mathbb{C}^{\,c}_+};\qquad
\overline{b_-(-\overline{k})}=b_-(k),\quad
z(k)\in\mathbb{R}\cap\Im k\geq0 ;$ \item
$k\,a_+(k)=z(k)a_-(k),\quad
\Im z(k)\geq0;$\\
$k\,\overline{b_+(\overline{k})}=-z(k)b_-(k),\quad
z(k)\in\mathbb{R}\cap\Im k\geq0;$ \item
$\overline{b_+(\overline{k}\pm0)}=a_{+}(k\mp0),\quad
b_-(k\pm0)=a_-(k\mp0),\\
b_-(k\pm0)=\overline{a_-(k\pm0)},\quad
b_+(\overline{k}\pm0)=a_+(k\pm0),\quad
k\in\left[0,\i\sqrt{\dsfrac{c}{4(c+\omega)}}\right];$
\item $a_+(k)b_-(k)+a_-(k)\overline{b_+(\overline{k})}=0,\quad z(k)\in\mathbb{R}\cap\Im k\geq0,$\\\\
$a_+(k)\overline{b_-(\overline{k})}+a_-(k)b_+(k)=0,\quad k\in\mathbb{R}\setminus\left\{0\right\},$\\\\
$\dsfrac{k}{z(k)}\(|a_+(k)|^2-|b_+(k)|^2\)=1,\quad
k\in\mathbb{R}\setminus\left\{0\right\};$\\\\
$\dsfrac{z(k)}{k}\(|a_-(k)|^2-|b_-(k)|^2\)=1,\quad
k\in\mathbb{R}\setminus\left\{0\right\}.$
\end{enumerate}
\renewcommand{\theenumi}{\arabic{enumi}}
\end{lem}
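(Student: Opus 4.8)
The plan is to establish the four Wronskian representations first and then to read off properties I--V from them. Rewriting the $x$-equation (\ref{Shturm_x-eq}) in the form $\varphi''_{xx}=\left(\frac14-\frac{\lambda(m+\omega)}{\omega}\right)\varphi$ shows that it has no first-order term, so the Wronskian $W\{f,g\}$ of any two of its solutions is independent of $x$; moreover, since $m$ and $\omega$ are real, each of $\overline{\varphi_\pm(x,t;\bar k)}$ and $\overline{\varphi_\pm(x,t;-\bar k)}$ again solves this equation with parameter $\lambda(k)$. I would substitute the scattering relations (\ref{scattering relations varphi}) into $W\{\varphi_-,\varphi_+\}$, $W\{\overline{\varphi_+(x,t;\bar k)},\varphi_-\}$ and $W\{\varphi_+,\overline{\varphi_-(x,t;\bar k)}\}$, use bilinearity and antisymmetry of the Wronskian to discard the degenerate terms $W\{\varphi_+,\varphi_+\}$, $W\{\varphi_-,\varphi_-\}$, and then evaluate the two surviving ``free'' Wronskians by sending $x\to+\infty$, resp. $x\to-\infty$, and inserting the exponential asymptotics (\ref{asymptotics varphi + x}), (\ref{asymptotics varphi - x}) (the prefactors $\sqrt[4]{\omega/(m+\omega)}$ tend to $1$). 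This yields $W\{\overline{\varphi_+(x,t;\bar k)},\varphi_+(x,t;k)\}=2\i k$ and $W\{\overline{\varphi_-(x,t;\bar k)},\varphi_-(x,t;k)\}=-2\i z(k)$, whence items 1--4; comparing the two resulting expressions for $W\{\varphi_-,\varphi_+\}$ already gives $k\,a_+(k)=z(k)\,a_-(k)$.

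For properties I--III I would feed the symmetry relations of the Jost solutions from Lemma \ref{lemma Jost solutions_Sturm-Liouville} --- above all $\varphi_\pm(x,t;k)=\overline{\varphi_\pm(x,t;-\bar k)}$ --- into items 1--4, using $\overline{W\{f,g\}}=W\{\bar f,\bar g\}$ and the two elementary facts about the branch $z(k)=\sqrt{k^2+\frac{c}{4(c+\omega)}}$ fixed by $z\sim k$ at infinity, namely $z(-k)=-z(k)$ and $\overline{z(\bar k)}=z(k)$. For instance, applying complex conjugation together with $k\mapsto-\bar k$ to item 1 and using $\varphi_\pm(k)=\overline{\varphi_\pm(-\bar k)}$ gives at once $\overline{a_+(-\bar k)}=a_+(k)$; the $b_+$-part of I, all of II, and the relation $k\,\overline{b_+(\bar k)}=-z(k)\,b_-(k)$ in III are obtained in exactly the same fashion, the $z$-identities being used only to carry the factors $z(\pm\bar k)$ through the conjugations. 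This part is pure bookkeeping.

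Property V splits into two bilinear identities and two pseudo-unitarity relations. For the bilinear identities I would substitute one scattering relation of (\ref{scattering relations varphi}) into the other (in both orders, after conjugating and replacing $k$ by $\bar k$ where necessary) and invoke the linear independence of $\varphi_+(x,t;k)$ and $\overline{\varphi_+(x,t;\bar k)}$ (equivalently, of $\varphi_-(x,t;k)$ and $\overline{\varphi_-(x,t;\bar k)}$); matching coefficients produces $a_+b_-+a_-\overline{b_+(\bar k)}=0$ and $a_+\overline{b_-(\bar k)}+a_-b_+=0$ (and, from the remaining coefficient, $a_+\overline{a_-(\bar k)}+b_+b_-=1$). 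For the relations $\frac{k}{z}\left(|a_+|^2-|b_+|^2\right)=1$ and $\frac{z}{k}\left(|a_-|^2-|b_-|^2\right)=1$ on $k\in\mathbb{R}\setminus\{0\}$ I would compute $W\{\varphi_-(x,t;k),\overline{\varphi_-(x,t;\bar k)}\}$ two ways: directly from the $x\to-\infty$ asymptotics it equals $2\i z(k)$, while expanding both factors by (\ref{scattering relations varphi}) and using the free Wronskians above gives $2\i k\left(a_+(k)\overline{a_+(\bar k)}-b_+(k)\overline{b_+(\bar k)}\right)$; on the real axis $\bar k=k$ and the two expressions collapse to the first relation, the second following analogously from $W\{\varphi_+(x,t;k),\overline{\varphi_+(x,t;\bar k)}\}$.

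\textbf{The main obstacle} is Property IV, which I would do last. There $k$ runs over the segment $[0,\i\sqrt{c/(4(c+\omega))}]$, which lies in the open upper half-plane --- so $\varphi_+(x,t;k)$, and hence the parts of $a_+,b_+$ built from it, are single-valued there --- but which is precisely the branch cut of $z(k)$, across which $z(k+0)=-z(k-0)$ and the boundary values of $\varphi_-$ are linked by the relations $\varphi_-(k\pm0)=\overline{\varphi_-(\bar k\mp0)}=\varphi_-(\bar k\pm0)$ of Lemma \ref{lemma Jost solutions_Sturm-Liouville}. Inserting these boundary-value symmetries and the sign flip of $z$ into the Wronskian representations of items 1--4 yields the four identities of IV. The only genuine difficulty is the combinatorics of the $\pm0$ labels and the conjugations --- in particular, keeping straight which of $a_\pm,b_\pm$ actually jumps across the segment (those containing $\varphi_-$, through $z(k)$) and which stays continuous --- so I expect this branch-and-sign bookkeeping, rather than any analytic subtlety, to be where the argument is most error-prone.
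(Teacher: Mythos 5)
Your proposal is essentially correct, but note that the paper itself contains no proof of this lemma: the identities are taken as standard facts from the scattering theory of the Schr\"odinger operator with a step-like potential (\cite{Cohen_Kappeler}, \cite{Kappeler_1985}), carried over to the Camassa--Holm setting through the Liouville transform (\ref{y_x}), (\ref{psi_varphi}) and the explicit dictionary written out after Lemma \ref{lemma properties a(k) b(k)}, which relates $a_{\pm}$, $b_{\pm}$, $W$ to the corresponding quantities $a^K_{\pm}$, $b^K_{\pm}$, $W^K$ for (\ref{Shturm_psi_y-eq}). What you propose is instead a direct verification in the $x$-variable: since (\ref{Shturm_x-eq}) has no first-order term, Wronskians of its solutions are $x$-independent, and evaluating the two nondegenerate ones at $x\to\pm\infty$ gives $W\{\overline{\varphi_+(\overline{k})},\varphi_+\}=2\i k$ and $W\{\overline{\varphi_-(\overline{k})},\varphi_-\}=-2\i z(k)$, from which items 1--4 and then I--V follow by the symmetry bookkeeping you describe; this is exactly the argument in the cited scattering literature, so the two routes are equivalent in substance, yours being self-contained at the cost of redoing standard work, the paper's being citation-plus-transform. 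Three small points to watch in writing it up. First, on the minus side the constant prefactor in (\ref{asymptotics varphi - x}) must be kept: it is $\sqrt[4]{\omega/(c+\omega)}$ (not $1$) that compensates the factor $\sqrt{(c+\omega)/\omega}$ produced by differentiating the exponential, so that the Wronskian comes out as $-2\i z(k)$ exactly, as you state. Second, the scattering relations (\ref{scattering relations varphi}) hold only for $k\in\mathbb{R}\setminus\{0\}$, respectively $z(k)\in\mathbb{R}$, $\Im k\geq0$; on the larger domains in items 1--4 the Wronskian formulas serve as the \emph{definition} of the extended $a_{\pm}$, $b_{\pm}$ (that is what ``extended domains of definition'' means), so this should be said explicitly rather than derived there. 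Third, the first identity of V is asserted on $z(k)\in\mathbb{R}\cap\Im k\geq0$, which includes the cut, where the first scattering relation is unavailable; the cleanest way to obtain it there is to observe that $a_+(k)b_-(k)+a_-(k)\overline{b_+(\overline{k})}=0$ is an algebraic consequence of the two relations in III, which you already have on that whole set. With these adjustments your outline, including the boundary-value and sign bookkeeping for IV, goes through.
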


\begin{lem}\label{lemma Wronskian}
\noindent Function $W(k)$ (\ref{Wronskian definition}) possesses
the following properties:
\begin{itemize}
\item $W(k)$ is analytic in $\Im z(k)>0$ and continuous in $\Im
z(k)\geq0;$ \item $W(k)\neq0$ for any $k\in\left\{k:
z(k)\in\mathbb{R}\wedge\Im k\geq0\wedge
k\neq\i\sqrt{\dsfrac{c}{4(c+\omega)}}\right\}.$ \item zeros of
$W(k)$ in the domain $\Im z(k)>0$ are simple and lie in the
interval $\(\i\sqrt{\dsfrac{c}{4(c+\omega)}},\ \dsfrac{\i}{2}\).$
\end{itemize}
\end{lem}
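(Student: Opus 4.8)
The plan is to transport everything to the classical scattering theory of the step-like Schr\"odinger operator (\ref{Shturm_psi_y-eq}) through the Liouville correspondence (\ref{Jost solutions and Kappeler Jost solutions}), feeding in the one ingredient proper to Camassa--Holm, the positivity of the weight in (\ref{Shturm_x-eq}) guaranteed by (\ref{m+omega}). Concretely, since (\ref{Shturm_x-eq}) has no first-order term, $\partial_x W\{\varphi_-,\varphi_+\}=0$, so $W(k)$ is $x$-independent; substituting (\ref{Jost solutions and Kappeler Jost solutions}), using $\d y/\d x=\sqrt{(m+\omega)/\omega}$ and the $x$-independence of $\wp$, the $\sqrt[4]{\cdot}$-factors cancel in the $x$-Wronskian and one gets
\[
W(k)=\e^{-\i\wp z(k)}\exp\left\{\frac{\i t}{2\lambda}\left(\sqrt{(c+\omega)\omega}\,z(k)-\omega k\right)\right\}\cdot W_y\{\psi^K_-,\psi^K_+\}(k).
\]
The first exponential is entire and nonvanishing; the second is a priori singular only where $\lambda=0$, i.e. at $k=\frac{\i}{2}$, but there $\sqrt{(c+\omega)\omega}\,z(k)-\omega k$ vanishes to first order (using $z(\frac{\i}{2})=\frac{\i}{2}\sqrt{\omega/(c+\omega)}$ and $\frac{\d z}{\d k}(\frac{\i}{2})=\sqrt{(c+\omega)/\omega}$) while $\lambda$ too vanishes to first order, so the exponent stays finite and the singularity is removable. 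Hence $W(k)$ inherits the analytic structure of the classical Jost--Wronskian $W_y\{\psi^K_-,\psi^K_+\}$, which by \cite{Cohen_Kappeler}, \cite{Bazargan_2008}, \cite{Kappeler_1985}, \cite{Egorova_2008} is analytic in $\{\Im z(k)>0\}$ (a subdomain of $\{\Im k>0\}$) and continuous up to $\{\Im z(k)=0\}$; this is the first assertion, which can equally be read off $W(k)=2\i k\,a_+(k)=2\i z(k)\,a_-(k)$ from Lemma \ref{symmetries a b}.

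For the second assertion suppose $W(k)=0$ with $z(k)\in\mathbb{R}$, $\Im k\ge0$, $k\ne\i\sqrt{c/(4(c+\omega))}$, so that $\varphi_-(\cdot,t;k)$ and $\varphi_+(\cdot,t;k)$ are proportional. If $k$ is real with $k\ne0$, then $W=2\i k\,a_+(k)$ forces $a_+(k)=0$, and the unitarity relation $\frac{k}{z(k)}\left(|a_+(k)|^2-|b_+(k)|^2\right)=1$ of Lemma \ref{symmetries a b} gives $-\frac{k}{z(k)}|b_+(k)|^2=1$, impossible because $k/z(k)>0$ on $\mathbb{R}\setminus\{0\}$. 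If instead $k=\i s$ with $0\le s<\sqrt{c/(4(c+\omega))}$ (the remaining part of the set, $k=0$ included), then $\varphi_+(\cdot,t;\i s)$ is real-valued, by $\varphi_+(x,t;k)=\overline{\varphi_+(x,t;-\overline k)}$ with $-\overline{\i s}=\i s$, and bounded as $x\to+\infty$; but $\i s$ lies on the branch cut of $z(k)$, and the two boundary values of $\varphi_-$ there have leading behaviour $\e^{\mp\i z_0\sqrt{(c+\omega)/\omega}\,x}$ with $z_0=\sqrt{c/(4(c+\omega))-s^2}\ne0$ as $x\to-\infty$, hence are linearly independent; proportionality of one of them to the real solution $\varphi_+(\cdot,t;\i s)$ would --- after conjugating and using the reflection relations of Lemma \ref{lemma Jost solutions_Sturm-Liouville}, $\overline{\varphi_-(\cdot;k+0)}=\varphi_-(\cdot;k-0)$ --- make both boundary values multiples of $\varphi_+(\cdot,t;\i s)$, contradicting independence. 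The point $k=\i\sqrt{c/(4(c+\omega))}$ is excluded because there $z_0=0$ and this dichotomy degenerates: it is the band edge, handled separately.

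For the third assertion let $k_0$ be a zero of $W$ with $\Im z(k_0)>0$, hence $\Im k_0>0$. Then $\varphi(x):=\varphi_-(x,t;k_0)$ coincides up to a constant with $\varphi_+(x,t;k_0)$, so it solves (\ref{Shturm_x-eq}) and decays exponentially as $x\to-\infty$ (rate $\Im z(k_0)$) and as $x\to+\infty$ (rate $\Im k_0$); thus $\varphi,\varphi',\varphi''\in L_2(\d x)$, and multiplying $-\varphi''+\frac14\varphi=\lambda\frac{m+\omega}{\omega}\varphi$ by $\overline\varphi$ and integrating over $\mathbb{R}$ gives
\[
\lambda\int_{\mathbb{R}}\frac{m+\omega}{\omega}\,|\varphi|^2\,\d x=\int_{\mathbb{R}}\left(|\varphi'|^2+\tfrac14|\varphi|^2\right)\d x ,
\]
so $\lambda=\lambda(k_0)$ is real and, by (\ref{m+omega}), strictly positive. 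Now $k_0^2=\lambda-\frac14$ is real, and it must be negative since $k_0^2\ge0$ would make $k_0$ real against $\Im k_0>0$; hence $k_0=\i s_0$ with $s_0=\sqrt{\tfrac14-\lambda}\in\bigl(0,\tfrac12\bigr)$. Then $z(k_0)^2=\lambda-\frac{\omega}{4(c+\omega)}$ is real, and $\Im z(k_0)>0$ forces $z(k_0)^2<0$, i.e. $\lambda<\frac{\omega}{4(c+\omega)}$; therefore $s_0>\sqrt{\tfrac14-\frac{\omega}{4(c+\omega)}}=\sqrt{\frac{c}{4(c+\omega)}}$, that is, $k_0\in\bigl(\i\sqrt{c/(4(c+\omega))},\frac{\i}{2}\bigr)$. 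For simplicity, $\varphi$ is real-valued on the imaginary axis (again by $\varphi(x)=\overline{\varphi(x)}$), and differentiating the $\varphi_\pm$-equations in $\lambda$ and integrating by parts yields, in the standard way, $\left.\frac{\d W}{\d\lambda}\right|_{\lambda_0}=c_1\int_{\mathbb{R}}\frac{m+\omega}{\omega}\,\varphi^2\,\d x$ with $c_1\ne0$; the integral is positive (a positive weight times the square of a nontrivial real function), so $\left.\frac{\d W}{\d\lambda}\right|_{\lambda_0}\ne0$, and since $\lambda\mapsto k$ is biholomorphic near $k_0$ with nonzero derivative, $W$ has a simple zero at $k_0$.

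The step I expect to be the main obstacle is the second assertion near the internal threshold $k=0$: carefully tracking the two boundary values of $\varphi_-$ across the cut of $z(k)$ and making the real-solution-versus-travelling-wave argument watertight there, which is exactly the delicate point in the classical step-like analysis of \cite{Cohen_Kappeler} and \cite{Bazargan_2008}. Everything else is either that cited scattering theory or a direct consequence of the positivity built into assumption (\ref{m+omega}).
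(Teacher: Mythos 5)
Your argument is correct, and it is in substance the same route the paper takes, with the difference that the paper's proof is almost entirely by citation while you execute the arguments. The paper disposes of the first two bullets and of "discrete spectrum above $\i\sqrt{c/(4(c+\omega))}$, zeros simple" by referring to Cohen--Kappeler \cite{Kappeler_1985}, and of the upper bound $\i/2$ by the positivity argument of \cite[Claim 2]{Constantin_2001}; you reprove the Kappeler ingredients directly (the unitarity identity of Lemma \ref{symmetries a b} on the real axis, and the realness-of-$\varphi_+$ versus two-independent-oscillatory-boundary-values argument on the cut, including the endpoint $k=0$), and your single virial computation $\lambda\int w|\varphi|^2=\int(|\varphi'|^2+\tfrac14|\varphi|^2)$ with $w=\frac{m+\omega}{\omega}>0$ delivers both endpoint bounds at once, since $\Im z(k_0)>0$ forces $z(k_0)^2<0$, i.e. $0<\lambda<\frac{\omega}{4(c+\omega)}$ -- exactly Constantin's mechanism, packaged so that the location statement and the starting point of the simplicity argument come from one identity. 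What the paper's version buys is brevity; yours buys self-containedness, at the cost of two points you should make explicit to be airtight: the inclusion $\{\Im z(k)>0\}\subset\{\Im k>0\}$ (true, e.g. by the minimum principle applied to the harmonic function $\Im z$ on the upper half-plane cut along $(0,\i\sqrt{c/(4(c+\omega))}]$, which you use silently to get decay of $\varphi_+$ at $+\infty$), and the nonvanishing constant $c_1$ in $\frac{\d W}{\d\lambda}\big|_{\lambda_0}=c_1\int w\,\varphi^2\,\d x$, which in the standard computation equals (up to sign) the proportionality constant $\mu_j$ between the two Jost solutions and is therefore nonzero; also note that linear independence of the two boundary values of $\varphi_-$ on the cut should be read off their Wronskian (which requires the derivative asymptotics of the Jost solutions, available from the Volterra equations), not just the leading exponential behaviour. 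None of these is a gap in substance; they are routine completions of the same classical scattering-theoretic scheme the paper invokes.
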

\begin{proof}
It is sufficient to prove only the third property, as first two
are immediate from corresponding results from
\cite{Kappeler_1985}. The fact that the discrete spectrum may lie
only in the interval $\(\i\sqrt{\frac{c}{4(c+\omega)}},+\i\
\infty\)$, also follows from \cite{Kappeler_1985}. The fact that
it indeed can lie only in the interval
$\(\i\sqrt{\frac{c}{4(c+\omega)}},\ \frac{\i}{2}\),$ can be proved
as in \cite[Claim 2, p. 962]{Constantin_2001}.
\end{proof}

The following lemma establishes a condition on an initial data
which provides $W\(\i\sqrt{\frac{c}{4(c+\omega)}}\)\neq0.$

\begin{lem}\label{lemma generic case}
Suppose that for all $x\in\mathbb{R}$
$$\dsfrac{m(x,0)-c}{c+\omega}\leq0.$$
Then $\qquad W\(\i\ \sqrt{\frac{c}{4(c+\omega)}}\)\neq0.$
\end{lem}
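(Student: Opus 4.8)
\medskip
\noindent\textbf{Plan of the proof.}
Write $k_0:=\i\sqrt{\frac{c}{4(c+\omega)}}$ for the edge point of the spectrum, so that $z(k_0)=0$ and, by (\ref{lambda_k}), $\lambda=\lambda_0:=\frac{\omega}{4(c+\omega)}$. By Lemma \ref{symmetries a b} one has $W(k)=2\i k\,a_+(k)$; in particular $W(k)$ depends neither on $x$ nor on $t$, so it suffices to work at $t=0$. Since $W(k_0)=W\left\{\varphi_-(\cdot,0;k_0),\varphi_+(\cdot,0;k_0)\right\}$ is the Wronskian of two solutions of the second order equation (\ref{Shturm_x-eq}), which has no first order term, $W(k_0)=0$ is equivalent to linear dependence of $\varphi_-(x,0;k_0)$ and $\varphi_+(x,0;k_0)$. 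The key point is that, by (\ref{Shturm_x-eq}) and (\ref{lambda_k}), at $\lambda=\lambda_0$ equation (\ref{Shturm_x-eq}) becomes
\[
\varphi''_{xx}(x,0;k_0)=\frac{c-m(x,0)}{4(c+\omega)}\,\varphi(x,0;k_0),
\]
so, putting $p(x):=\frac{c-m(x,0)}{4(c+\omega)}$, the hypothesis $\frac{m(x,0)-c}{c+\omega}\leq0$ (recall $c>0$, $\omega>0$) says exactly that $p(x)\geq0$ for every $x\in\mathbb{R}$. Moreover $\varphi_\pm(x,0;k_0)$ are real-valued, by the symmetry relations of Lemma \ref{lemma Jost solutions_Sturm-Liouville} (equivalently, by (\ref{Jost solutions and Kappeler Jost solutions})). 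The plan is to show that a nontrivial real solution $\Phi$ of $\Phi''=p\,\Phi$ with $p\geq0$ cannot at the same time stay bounded as $x\to-\infty$ and tend to $0$ as $x\to+\infty$.

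To this end I first record the threshold behaviour of the Jost solutions. Setting $k=k_0$, $t=0$ in (\ref{asymptotics varphi - x}) and using $z(k_0)=0$ (so the exponential prefactor is $1$) gives $\varphi_-(x,0;k_0)\to\sqrt[4]{\frac{\omega}{c+\omega}}>0$ as $x\to-\infty$; since $p\in L_1(\mathbb{R}_-)$ by (\ref{conditions u(x,t)}), the equation $\varphi''_{xx}=p\,\varphi$ together with boundedness of $\varphi_-(\cdot,0;k_0)$ near $-\infty$ forces $\partial_x\varphi_-(x,0;k_0)\to0$ as $x\to-\infty$ (otherwise $\varphi_-$ would grow linearly). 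Setting $k=k_0$, $t=0$ in (\ref{asymptotics varphi + x}) and noting that $-\i k_0=\sqrt{\frac{c}{4(c+\omega)}}>0$, so the prefactor $\e^{-\i k_0 x}$ blows up as $x\to+\infty$, gives $\varphi_+(x,0;k_0)\to0$ as $x\to+\infty$ (in fact exponentially fast). This last point — that $z=0$ is merely the bottom of the continuous spectrum and no resonance occurs there — rests on the weighted $L_1$ decay of $v$ on $\mathbb{R}_-$ furnished by (\ref{conditions u(x,t)}) (standard scattering theory for (\ref{Shturm_psi_y-eq}); see \cite{Kappeler_1985}, \cite{Cohen_Kappeler}, \cite{Egorova_2008}), and is the only step needing some care.

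Now suppose, for contradiction, that $W(k_0)=0$. Then $\varphi_-(x,0;k_0)=\gamma\,\varphi_+(x,0;k_0)$ for some real $\gamma\neq0$, so $\Phi(x):=\varphi_-(x,0;k_0)$ is a nontrivial real solution of $\Phi''=p\,\Phi$ with $\Phi(x)\to\sqrt[4]{\frac{\omega}{c+\omega}}>0$ and $\Phi'(x)\to0$ as $x\to-\infty$, and $\Phi(x)\to0$ as $x\to+\infty$. Put $g:=\Phi\Phi'$, so that $2g=\left(\Phi^2\right)'$ and $g'=\left(\Phi'\right)^2+p\,\Phi^2\geq0$; hence $g$ is nondecreasing on $\mathbb{R}$. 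Since $g(x)\to0$ as $x\to-\infty$, it follows that $g\geq0$ on $\mathbb{R}$, i.e. $\left(\Phi^2\right)'=2g\geq0$, so $\Phi^2$ is nondecreasing; therefore $\Phi^2(x)\geq\lim_{s\to-\infty}\Phi^2(s)=\sqrt{\frac{\omega}{c+\omega}}>0$ for every $x$, contradicting $\Phi(x)\to0$ as $x\to+\infty$. Hence $W(k_0)\neq0$, which proves the lemma.
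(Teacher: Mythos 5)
Your proof is correct and follows essentially the same route as the paper: assume $W(k_0)=0$ at $k_0=\i\sqrt{\frac{c}{4(c+\omega)}}$, so the two Jost solutions are proportional and yield a nontrivial real solution $\Phi$ of $\Phi''=p\,\Phi$ with $p=\frac{c-m(\cdot,0)}{4(c+\omega)}\geq 0$, $\Phi\to\sqrt[4]{\frac{\omega}{c+\omega}}>0$, $\Phi'\to 0$ as $x\to-\infty$ and $\Phi\to 0$ as $x\to+\infty$, which is impossible. The only (minor) difference is how the contradiction is extracted: the paper integrates by parts to obtain $\int_{-\infty}^{+\infty}|\Phi'|^2=\int_{-\infty}^{+\infty}\frac{m-c}{4(c+\omega)}|\Phi|^2$ (a positive quantity equal to a nonpositive one), whereas you use the pointwise monotonicity of $\Phi\Phi'$, i.e. the un-integrated form of the same identity, which in addition spares you the decay of $\Phi'$ at $+\infty$ and any integrability discussion.
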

\begin{proof}
Define $k_0=\i\ \sqrt{\frac{c}{4(c+\omega)}},$
$\lambda_0=\lambda(k_0),$ and suppose that $W\(k_0\)=0.$ In that
case the Jost solutions $\varphi_-(x,t;k_0)$ and
$\varphi_+(x,t;k_0)$ are linearly dependent, that is there exists
a constant $C\in\mathbb{C}\setminus\left\{0\right\}$ such that
$\varphi_-(x,t;k_0)=C\varphi_+(x,t;k_0).$ From (\ref{asymptotics
varphi - x}), (\ref{asymptotics varphi + x}) we conclude that
function \\$\varphi(x,t):=\exp\left\{-\i
z\sqrt{\frac{c+\omega}{\omega}}\(c+\frac{\omega}{2\lambda}\)t\right\}\varphi_-(x,t;k_0)$
possesses the following properties (notice that $z(k_0)=0$):
\[\sqrt[4]{\dsfrac{m+\omega}{\omega}}\varphi(x,t)\rightarrow1,\quad \varphi_x(x,t)\rightarrow0, \quad\textrm{ as } x\rightarrow-\infty,\]
\[\varphi(x,t)\rightarrow0,
\quad\varphi_x(x,t)\rightarrow0,\quad \textrm{ as }
x\rightarrow+\infty.\]

\noindent Equation (\ref{Lax_representation_CH}a) can be rewritten
as follows:
\[-\varphi_{xx}(x,t)+\frac{1}{4}\varphi(x,t)=\lambda_0\ \frac{m+\omega}{\omega}\varphi(x,t),\]
where $\lambda_0=k_0^2+\frac{1}{4}=\frac{\omega}{4(c+\omega)}.$
Multiply the last equation by $\overline{\varphi}$ and integrate
from $R$ to $+\infty:$
\[\int\limits_{R}^{+\infty}\hskip-2mm-\varphi_{xx}(x,t)\overline{\varphi(x,t)}\d x+
\frac{1}{4}\int\limits_{R}^{+\infty}\hskip-2mm|\varphi(x,t)|^2\d
x\hskip-1mm=\hskip-1mm\lambda_0\int\limits_{R}^{+\infty}\hskip-2mm\frac{m+\omega}{\omega}|\varphi(x,t)|^2\d
x.\] By integrating by parts and transferring of a summand to the
right side of the equation, we get
\[-\varphi_x\overline{\varphi}\left|_{R}^{+\infty}\right.+\int\limits_{R}^{+\infty}|\varphi_{x}(x,t)|^2\d x
=\int\limits_{R}^{+\infty}\frac{m-c}{4(c+\omega)}|\varphi(x,t)|^2\d
x.\] Here we can take the limit as $R\rightarrow-\infty,$ and so
come to the equation
\[\int\limits_{-\infty}^{+\infty}|\varphi_{x}(x,t)|^2\d x
=\int\limits_{-\infty}^{+\infty}\frac{m(x,t)-c}{4(c+\omega)}|\varphi(x,t)|^2\d
x.\] As the left-hand side is strictly positive, and the
right-hand side is non-positive for $t=0$, we come to
contradiction with the statement of the lemma. So, $W(k_0)\neq0.$
\end{proof}

\begin{lem}\label{lemma properties a(k) b(k)}
The transmission coefficients $a^{-1}_{\pm}$ are meromorphic for
$\Im z(k)>0$ with simple poles at $\i\kappa_1,...,\i\kappa_N,$
$\sqrt{\frac{c}{4(c+\omega)}}<\kappa_N<...<\kappa_1<\dsfrac{1}{2},$
and continuous for $k\in\left\{k:\Im z(k)\geq0\wedge \Im k\geq0
\wedge z(k)\neq0\right\}.$ Moreover, if
\begin{enumerate}
\item we take $l=0$ in (\ref{conditions u(x,t)}) and require that
$\forall x\in\mathbb{R}: \quad c>m(x,0),$ \\then $a_+^{-1}$ is
continuous up to the point $z(k)=0;$ \item we take $l=1$ in
(\ref{conditions u(x,t)}), then $a_-^{-1}$ is continuous up to the
point $z(k)=0.$
\end{enumerate}
Asymptotically as $k\rightarrow\infty$ we have
\begin{equation}\label{transmission coeff+- asymptotics}
a^{-1}_{\pm}(k)=\e^{\i\wp k}\(1+\mathrm{O}(k^{-1})\).
\end{equation}
The residues of $a_+^{-1}(k)$ and $a_-^{-1}(k(z))$ are given by
\begin{equation}\label{transmission coeff+-residues}
\mathrm{Res}_{\, \i\kappa_j}\
\dsfrac{1}{a_+(k)}=\i\mu_j\gamma_{+,j}^2\  =\ \mathrm{Res}_{\,
z(\i\kappa_j)}\ \dsfrac{1}{a_-(k(z))}=\i\mu_j^{-1}\gamma_{-,j}^2
\end{equation}
where
\begin{equation}\label{gamma_+-j}
\gamma_{\pm,j}^{-2}:=\int\limits_{-\infty}^{+\infty}
\(\varphi_{\pm}(x,t;\i\kappa_j)\)^2\dsfrac{m+\omega}{\omega}\d x>0
\end{equation}
and $\varphi_+(x,t;\i\kappa_j)=\mu_j\varphi_-(x,t;\i\kappa_j)$
with quantities $\gamma_{\pm,j}$, $\mu_j$ independent on $t.$
\end{lem}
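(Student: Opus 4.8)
The plan is to read everything off the Wronskian $W(k)$ of (\ref{Wronskian definition}). By Lemma \ref{symmetries a b} one has $\dsfrac{1}{a_+(k)}=\dsfrac{2\i k}{W(k)}$ and $\dsfrac{1}{a_-(k)}=\dsfrac{2\i z(k)}{W(k)}$ on the respective domains, together with $\dsfrac{1}{a_-(k)}=\dsfrac{z(k)}{k}\cdot\dsfrac{1}{a_+(k)}$. Lemma \ref{lemma Wronskian} tells us that $W$ is analytic in $\Im z(k)>0$, continuous up to $\{\Im z(k)=0,\ \Im k\ge0\}$, has only finitely many zeros in $\Im z(k)>0$, all simple and lying in $\left(\i\sqrt{\dsfrac{c}{4(c+\omega)}},\ \dsfrac{\i}{2}\right)$ (denote them $\i\kappa_1,...,\i\kappa_N$), and does not vanish on the rest of $\{\Im z(k)\ge0,\ \Im k\ge0,\ z(k)\ne0\}$. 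Since $2\i k$ and $2\i z(k)$ are non-zero there as well, and $\varphi_\pm$ are as regular as stated in Lemma \ref{lemma Jost solutions_Sturm-Liouville}, this immediately gives the meromorphy of $a_+^{-1}$ and $a_-^{-1}$ in $\Im z(k)>0$ with exactly the simple poles $\i\kappa_j$, and their continuity up to the indicated part of the boundary; the asymptotics (\ref{transmission coeff+- asymptotics}) and the residues will then come out of examining $W$ at $k=\infty$ and at the $\i\kappa_j$.

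The band edge $k_0:=\i\sqrt{\dsfrac{c}{4(c+\omega)}}$, where $z(k_0)=0$, needs its own argument, and this I expect to be the main technical point, since it is precisely where the decay index $l$ of (\ref{conditions u(x,t)}) enters and where $z(k)$ has a square-root branch point. Through the identification (\ref{Jost solutions and Kappeler Jost solutions}), $\varphi_-$ inherits the behaviour at $z=0$ of the left Jost solution of (\ref{Shturm_psi_y-eq_for z}): with $l=0$, i.e. $v+\dsfrac{c}{4(c+\omega)}\in L_1(\mathbb{R}_-,(1+|y|)\,\d y)$, this solution, and hence $W$, extends continuously up to $z(k)=0$, while with $l=1$ (one more power of $1+|y|$) it extends there in a $C^1$ manner in the local variable $z$ (standard step-like scattering, \cite{Cohen_Kappeler},\cite{Kappeler_1985},\cite{Bazargan_2008},\cite{Egorova_2008}). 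In case 1, $l=0$ together with $c>m(x,0)$ for all $x$ gives $W(k_0)\ne0$ by Lemma \ref{lemma generic case}, so $\dsfrac{1}{a_+(k)}=\dsfrac{2\i k}{W(k)}$ is continuous at $k_0$. In case 2, $l=1$, we argue with $\dsfrac{1}{a_-(k)}=\dsfrac{2\i z(k)}{W(k)}$: if $W(k_0)\ne0$ it is continuous and vanishes at $k_0$, while if $W(k_0)=0$ the $C^1$-expansion $W=c_1z+\mathrm{o}(z)$ in $z$, in which the edge zero of $W$ is necessarily simple, $c_1\ne0$ (a half-bound state at the band edge cannot produce a higher-order zero, otherwise it would be a genuine $L_2$-eigenfunction within the continuous spectrum), shows that $\dsfrac{2\i z}{W}$ stays bounded and continuous at $k_0$.

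For the behaviour as $k\to\infty$, $\Im k\ge0$, substitute the expansions (\ref{asymptotics varphi - k}), (\ref{asymptotics varphi + k}) into $W(k)=\varphi_-\,\partial_x\varphi_+-\varphi_+\,\partial_x\varphi_-$. Differentiating in $x$ brings down the leading factors $\i k\sqrt{\dsfrac{m+\omega}{\omega}}$ from $\varphi_+$ and $-\i z(k)\sqrt{\dsfrac{m+\omega}{\omega}}$ from $\varphi_-$, so, using $z(k)=k+\mathrm{O}(k^{-1})$, $W(k)=2\i k\sqrt{\dsfrac{m+\omega}{\omega}}\,\varphi_-\varphi_+\,(1+\mathrm{O}(k^{-1}))$. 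Now the two amplitude prefactors of $\varphi_\pm$ multiply to $\sqrt{\dsfrac{\omega}{m+\omega}}$, the $t$-dependent exponentials only contribute $\mathrm{O}(k^{-1})$, and --- exactly because of the definition (\ref{wp}) of $\wp$ --- the two phase functions add up to $-\i k\wp$ (the $x$-dependence cancelling, consistently with $W$ being $x$-independent). Therefore $W(k)=2\i k\,\e^{-\i k\wp}(1+\mathrm{O}(k^{-1}))$, whence $\dsfrac{1}{a_+(k)}=\e^{\i\wp k}(1+\mathrm{O}(k^{-1}))$, and $\dsfrac{1}{a_-(k)}=\dsfrac{z(k)}{k}\cdot\dsfrac{1}{a_+(k)}$ gives the same, i.e. (\ref{transmission coeff+- asymptotics}).

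Finally, for the residues: $W(\i\kappa_j)=0$ means $\varphi_-(x,t;\i\kappa_j)$ and $\varphi_+(x,t;\i\kappa_j)$ are proportional, which defines $\mu_j$ by $\varphi_+(x,t;\i\kappa_j)=\mu_j\varphi_-(x,t;\i\kappa_j)$; since $\varphi_-(\cdot;\i\kappa_j)$ decays as $x\to-\infty$ and $\varphi_+(\cdot;\i\kappa_j)$ as $x\to+\infty$, the proportionality forces exponential decay of both at $\pm\infty$, so each is an eigenfunction of (\ref{Shturm_x-eq}) in $L_2\left(\mathbb{R},\dsfrac{m+\omega}{\omega}\,\d x\right)$, and --- $\varphi_+(\cdot;\i\kappa_j)$ being real by the symmetries of Lemma \ref{lemma Jost solutions_Sturm-Liouville} --- the integral (\ref{gamma_+-j}) is finite and positive. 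Because the zero of $W$ at $\i\kappa_j$ is simple, $\mathrm{Res}_{\,\i\kappa_j}\dsfrac{1}{a_+}=\dsfrac{2\i(\i\kappa_j)}{W'(\i\kappa_j)}$ with $W'$ the derivative in $k$, and I compute $W'(\i\kappa_j)=2\i\kappa_j\dsfrac{\d W}{\d\lambda}(\lambda_j)$ by the classical identity that comes from differentiating (\ref{Shturm_x-eq}) in $\lambda$, namely $\dsfrac{\d}{\d x}W\{\partial_\lambda\varphi_-,\varphi_+\}=\dsfrac{m+\omega}{\omega}\varphi_-\varphi_+$ together with $\partial_\lambda W=W\{\partial_\lambda\varphi_-,\varphi_+\}+W\{\varphi_-,\partial_\lambda\varphi_+\}$: evaluating the latter as $x\to+\infty$, where the exponential decay of the eigenfunctions makes $W\{\varphi_-,\partial_\lambda\varphi_+\}$ vanish and turns $W\{\partial_\lambda\varphi_-,\varphi_+\}$ into $\int_{-\infty}^{+\infty}\dsfrac{m+\omega}{\omega}\varphi_-\varphi_+\,\d x=\mu_j^{-1}\gamma_{+,j}^{-2}$, yields $\mathrm{Res}_{\,\i\kappa_j}\dsfrac{1}{a_+}=\i\mu_j\gamma_{+,j}^2$; running the same computation for the residue of $\dsfrac{1}{a_-}=\dsfrac{2\i z}{W}$ in the variable $z$ and using $\gamma_{+,j}^2=\mu_j^{-2}\gamma_{-,j}^2$ produces the remaining equalities of (\ref{transmission coeff+-residues}). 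The $t$-independence is then clear: $\lambda_j$ is $t$-independent by isospectrality of the Lax operator; $\mu_j$ is $t$-independent because $\varphi_\pm(x,t;\i\kappa_j)$ both satisfy the linear equation (\ref{Shturm_t-eq}) (substituting $\varphi_+=\mu_j\varphi_-$ there leaves $\partial_t\mu_j\cdot\varphi_-=0$); and $\gamma_{\pm,j}$ is $t$-independent by a direct calculation in which, with (\ref{Shturm_t-eq}), (\ref{CH alternative}) and (\ref{Shturm_x-eq}), $\partial_t\!\left(\varphi_\pm^2\,\dsfrac{m+\omega}{\omega}\right)$ is rewritten as an $x$-derivative of a quantity that vanishes at $\pm\infty$, so $\partial_t\gamma_{\pm,j}^{-2}=0$.
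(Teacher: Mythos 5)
Your argument is correct in substance, but it takes a genuinely different route from the paper's. The paper does not prove this lemma directly at all: immediately after the statement it records the explicit dictionary between the CH spectral quantities and the quantities $a_{\pm}^K$, $b_{\pm}^K$, $\gamma^K_{\pm,j}$, $\mu_j^K$, $W^K$ of the Schr\"odinger problem (\ref{Shturm_psi_y-eq}) --- each CH quantity equals its $K$-counterpart times an explicit exponential factor built from $\wp$, $z$, $k$, $t$, $\lambda$ --- and then invokes the known step-like scattering theory of \cite{Cohen_Kappeler}, \cite{Kappeler_1985}, \cite{Bazargan_2008}, \cite{Egorova_2008}, so that the meromorphy and continuity statements, the role of $l=0,1$ at the band edge, the residue formulas and the $t$-independence are all read off from the literature together with those factors; the only added remark is the regularity of $a_{\pm}$ at $k=\i/2$. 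You instead work entirely inside the paper: $1/a_+=2\i k/W$ and $1/a_-=2\i z/W$ from Lemma \ref{symmetries a b}, pole structure from Lemma \ref{lemma Wronskian}, the edge via Lemma \ref{lemma generic case} in case 1 and a local expansion in $z$ in case 2, the asymptotics (\ref{transmission coeff+- asymptotics}) by inserting (\ref{asymptotics varphi - k}), (\ref{asymptotics varphi + k}) into the Wronskian (your phase bookkeeping through (\ref{y_x_-}) and (\ref{wp}) is right), and the residues and $t$-independence by the classical $\partial_\lambda W$ identity plus a conservation computation with (\ref{Shturm_t-eq}) and (\ref{CH alternative}). Your route buys self-containedness and makes visible exactly where $c>m(x,0)$ and $l=0$ versus $l=1$ enter; the paper's route buys brevity and delegates the delicate edge analysis to the cited references. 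Two steps of yours are only sketched: in case 2 you need $\partial_z W|_{z=0}\neq0$ whenever $W(k_0)=0$, and your parenthetical ``otherwise it would be an $L_2$ eigenfunction'' is not a proof (a double zero in $z$ does not by itself produce an $L_2$ solution); the honest argument is the standard resonance computation of $W\{\partial_z\psi^K_-(\cdot\,;0),\psi^K_+(\cdot\,;k_0)\}$, which gives a nonzero multiple of the proportionality constant between $\psi^K_-(\cdot\,;0)$ and $\psi^K_+(\cdot\,;k_0)$ and genuinely uses $l=1$ --- precisely the fact contained in \cite{Cohen_Kappeler}. Likewise the evaluation of $\partial_\lambda W$ at $\i\kappa_j$ requires Marchenko-style care with the growth of $\partial_\lambda\varphi_{\pm}$ at the opposite infinities, though your conclusion agrees with (\ref{transmission coeff+-residues}).
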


Note that if $a_{\pm}^K(k,t)$, $b_{\pm}^K(k,t)$,
$\gamma^K_{\pm,j}(t)$, $\mu_j^K(t)$, \[W^K(t;k)\equiv
W\left\{\psi^K_-(y,t;k),\psi^K_+(y,t;k)\right\}\equiv\psi^K_-\cdot\frac{\partial\psi^K_+}{\partial
y}-\psi^K_+\cdot\frac{\partial\psi^K_-}{\partial y}\]are the
corresponding quantities for (\ref{Shturm_psi_y-eq}), then
\[a_{\pm}(k)=a_{\pm}^K(k,t)\exp\left\{-\i\wp z+\frac{\i\sqrt{(c+\omega)\omega\ }\ zt}{2\lambda}-\frac{\i\omega kt}{2\lambda}\right\},\]
\[b_{\pm}(k)=b_{\pm}^K(k,t)\exp\left\{\pm\(-\i\wp z+\frac{\i\sqrt{(c+\omega)\omega\ }\ zt}{2\lambda}+\frac{\i\omega kt}{2\lambda}\)\right\},\]
\[\gamma_{+,j}=\gamma_{+,j}^K(t)\exp\left\{\frac{\i\omega k_j t}{2\lambda_j}\right\},\]
\[\gamma_{-,j}=\gamma_{-,j}^K(t)\exp\left\{\i\wp z_j-\frac{\i\sqrt{\(c+\omega\)\omega\ } z_j t}{2\lambda_j}\right\},\]
\[\mu_j=\mu_j^K(t)\exp\left\{\i\wp z_j-\frac{\i\sqrt{\(c+\omega\)\omega\ }z_j t}{2\lambda_j}-\frac{\i\omega k_j t}{2\lambda_j}\right\}\]
with $\lambda_j:=\lambda(k_j)$, $z_j:=z(k_j),$
\[
W(k)=\exp\left\{-\i\wp z+\frac{\i\sqrt{(c+\omega)\omega\
}zt}{2\lambda}-\frac{\i\omega t k}{2\lambda}\right\}W^K(t;k)
\]
 and hence all
results known for (\ref{Shturm_psi_y-eq}) are easily applied in
our situation. Particularly, as
$$\frac{\i\sqrt{(c+\omega)\omega\ }\ zt}{2\lambda}-\frac{\i\omega
kt}{2\lambda}=\frac{\i tc}{2\(z\sqrt{\frac{c}{\omega}+1\ }\
+k\)},$$ then $a_{\pm}(k)$ are regular at the point
$k=\dsfrac{\i}{2}.$

\section{Vector Riemann -- Hilbert problem
(1).}\label{Sect_RH_problem_1} Suppose that the condition of lemma
{\ref{lemma generic case}} is satisfied. We define a vector
Riemann -- Hilbert problem as follows: sectionally meromorphic
function
$V_{\mathfrak{r}}(x,t;k)=\(V_{\mathfrak{r},\,1}(x,t;k),V_{\mathfrak{r},\,2}(x,t;k)\)$
is defined by

\begin{equation}\label{M_psi}
\left\{\begin{array}{l}
\sqrt[4]{\frac{m+\omega}{\omega}}\(\dsfrac{1}{a_+(k)}\varphi_-(x,t;k)\,\e^{\i
g(y,t;k)}\ , \ \varphi_+(x,t;k)\,\e^{-\i g(y,t;k)}\), \quad \Im
z(k)>0,
\\
\sqrt[4]{\frac{m+\omega}{\omega}}\(\overline{\varphi_+(x,t;\overline{k})}\,\e^{\i
g(y,t;k)}\ , \
\dsfrac{1}{\overline{a_+(\overline{k})}}\overline{\varphi_-(x,t;\overline{k})}\,\e^{-\i
g(y,t;k)}\),\quad \Im z(k)<0,
\end{array}\right.
\end{equation}
\noindent where $g(y,t;k)=k y-\frac{2\omega k t}{4k^2+1}.$

We are interested in the jump relations of
$V_{\mathfrak{r}}(x,t;k)$ on the contour
$\Sigma_{\mathfrak{r}}=\mathbb{R}\cup
\left[\i\sqrt{\frac{c}{4(c+\omega)}}\
,-\i\sqrt{\frac{c}{4(c+\omega)}}\right]$. The orientation of the
contour is chosen as follows: from $-\infty$ to $+\infty$ and from
$+\i\sqrt{\frac{c}{4(c+\omega)}}$ to
$-\i\sqrt{\frac{c}{4(c+\omega)}}.$ Positive side of the contour is
on the left, negative is on the right. By $M_{\pm}(x,t;k)$ we
denote the limit from the positive/negative side of the contour.

The scattering relations (\ref{scattering relations varphi}) and
lemma \ref{lemma properties a(k) b(k)} provides the following
properties of function $V_{\mathfrak{r}}(x,t;k):$

\begin{enumerate}
\item The analyticity: \\$V_{\mathfrak{r}}(x,t;.)$ is meromorphic
in $\mathbb{C}\setminus\Sigma_{\mathfrak{r}}$ with simple poles at
$\pm\i\kappa_j$ and continuous up to the boundary;

\item The jump relations: for $k\in\Sigma_{\mathfrak{r}}$\\
$V_{\mathfrak{r}}^-(x,t;k)=V_{\mathfrak{r}}^+(x,t;k)J_{\mathfrak{r}}(x,t;k),$
where
\begin{equation}\label{J_R}
J_{\mathfrak{r}}(x,t;k)=\begin{pmatrix}1&\frac{\overline{b_+(k)}}{\overline{a_+(k)}}\,\,\e^{-2\i
g(y,t;k)}\\-\frac{b_+(k)}{a_+(k)}\,\e^{2\i
g(y,t;k)}&\frac{z(k)}{k\,|a_+(k)|^2}\end{pmatrix},\quad
k\in\mathbb{R}\setminus\{0\};
\end{equation}

\begin{equation}\label{J_0_ic}
J_{\mathfrak{r}}(x,t;k)=\begin{pmatrix}1&0\\\frac{z(k+0)}{k\,
a_+(k-0)a_+(k+0)}\,\e^{2\i g(y,t;k)}&1\end{pmatrix},\quad
k\in\(\i\sqrt{\frac{c}{4(c+\omega)}},0\);
\end{equation}

\begin{equation}\label{J_0_-ic}
J_{\mathfrak{r}}(x,t;k)=\begin{pmatrix}1&\dsfrac{z(k+0)}{k\,
\overline{a_+(\overline{k}-0)}\,\overline{a_+(\overline{k}+0)}}\,\e^{-2\i
g(y,t;k)}\\0&1\end{pmatrix},\quad
k\in\(0,-\i\sqrt{\dsfrac{c}{4(c+\omega)}}\);
\end{equation}

\item The pole relation: for $j=1,...,N$\\
\begin{subequations}\label{pole conditions 1}
\begin{eqnarray}
\mathrm{Res}_{\i\kappa_j}V_{\mathfrak{r}}(x,t;k)=\lim\limits_{k\rightarrow\i\kappa_j}V_{\mathfrak{r}}(x,t;k)\begin{pmatrix}0&0\\\i\gamma_{+,j}^2\e^{2\i
g(y,t;\i\kappa_j)}&0\end{pmatrix},\qquad\ \ \\
\nonumber\mathrm{Res}_{-\i\kappa_j}V_{\mathfrak{r}}(x,t;k)=\lim\limits_{k\rightarrow-\i\kappa_j}V_{\mathfrak{r}}(x,t;k)\begin{pmatrix}0&-\i\gamma_{+,j}^2\e^{2\i
g(y,t;\i\kappa_j)}\\0&0\end{pmatrix};\\
\end{eqnarray}
\end{subequations}

\item Symmetry relations:\\
\begin{equation}\label{symmetries M 1}
\overline{V_{\mathfrak{r}}(x,t;\overline{k})}=V_{\mathfrak{r}}(x,t;-k)=V_{\mathfrak{r}}(x,t;k)\begin{pmatrix}0&1\\1&0\end{pmatrix},
\end{equation}
\[\overline{V_{\mathfrak{r}}(x,t;-\overline{k})}=V_{\mathfrak{r}}(x,t;k);\]
 \item asymptotics at the infinity: \begin{equation}\label{M infinity}
V_{\mathfrak{r}}(x,t;k) \rightarrow
\begin{pmatrix}1&1\end{pmatrix} \textrm{ as }
k\rightarrow\infty.\end{equation}
\end{enumerate}

\textbf{Regular Riemann--Hilbert problem.} It is useful to
transform our meromorphic RH problem defined in section
\ref{Sect_RH_problem_1} to holomorphic RH problem. In order to
achieve this, we define function $\widehat
V_{\mathfrak{r}}(x,t;k)$ as follows:
\begin{equation}\label{hat M}
\widehat V_{\mathfrak{r}}(x,t;k)=\left\{
\begin{array}{l}
V_{\mathfrak{r}}(x,t;k)\begin{pmatrix}1&0\\\dsfrac{-\i\gamma_{+,j}^2\e^{2\i
g(y,t;\i\kappa_j)}}{k-\i\kappa_j}&1\end{pmatrix},\qquad
|k-\i\kappa_j|<\varepsilon,
\\
V_{\mathfrak{r}}(x,t;k)\begin{pmatrix}1&\dsfrac{\i\gamma_{+,j}^2\e^{2\i
g(y,t;\i\kappa_j)}}{k+\i\kappa_j}\\0&1\end{pmatrix}, \qquad
|k+\i\kappa_j|<\varepsilon,\\
V_{\mathfrak{r}}(x,t;k),\qquad \textrm{ elsewhere},
\end{array}
\right.
\end{equation}
where $\varepsilon>0$ is sufficiently small number such that
circles $|k-\i\kappa_j|$ do not intersect and lie in the domain
$\Im z(k)>0.$

\begin{lem}\label{lemma RH1 regular}
Vector-valued function $\widehat V_{\mathfrak{r}}(x,t;k)$ solves
the following RH problem: find a sectionally-holomorphic function
$\widehat V_{\mathfrak{r}}(x,t;k)$ which satisfies the following:
\begin{enumerate}
\item $\widehat V_{\mathfrak{r}}(x,t;.)$ is holomorphic away of
the contour $\Sigma_{\mathfrak{r}}$ and circuits $C_j:=\left\{k:\
|k-\i\kappa_j|=\varepsilon\right\},$ $\overline C_j:=\left\{k:\
|k+\i\kappa_j|=\varepsilon\right\}.$ The orientation on $C_j$,
$\overline C_j$ is counterclockwise, so the positive side is
inside the circuits. \item jump condition $\widehat
V_{\mathfrak{r}}^-(x,t;k)=\widehat
V_{\mathfrak{r}}^+(x,t;k)\widehat J_{\mathfrak{r}}(x,t;k)$ is
satisfied, where
\[\widehat J_{\mathfrak{r}}(x,t;k)\equiv J_{\mathfrak{r}}(x,t;k),\quad k\in\Sigma_{\mathfrak{r}},\]
\begin{equation}\label{J_hat Cj}
\widehat J_{\mathfrak{r}}(x,t;k)=
\begin{pmatrix}1&0\\\dsfrac{\i\gamma_{+,j}^2\e^{2\i
g(y,t;\i\kappa_j)}}{k-\i\kappa_j}&1\end{pmatrix} \quad k\in C_j;
\end{equation}
\begin{equation}\label{J_hat Cj overline}
\widehat J_{\mathfrak{r}}(x,t;k)=
\begin{pmatrix}1&\dsfrac{-\i\gamma_{+,j}^2\e^{2\i
g(y,t;\i\kappa_j)}}{k+\i\kappa_j}\\0&1\end{pmatrix} \quad k\in
\overline C_j;
\end{equation}
\item Symmetry relations:\\
\begin{equation}\label{symmetries M 1 hat}
\overline{\widehat V_{\mathfrak{r}}(x,t;\overline{k})}=\widehat
V_{\mathfrak{r}}(x,t;-k)=\widehat
V_{\mathfrak{r}}(x,t;k)\begin{pmatrix}0&1\\1&0\end{pmatrix},
\end{equation}
\[\overline{\widehat V_{\mathfrak{r}}(x,t;-\overline{k})}=\widehat V_{\mathfrak{r}}(x,t;k);\]
 \item $\widehat V_{\mathfrak{r}}(x,t;k)\rightarrow \begin{pmatrix}1&1\end{pmatrix}$ as
$k\rightarrow\infty.$
\end{enumerate}
\end{lem}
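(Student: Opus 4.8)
The plan is to verify directly that the explicit substitution (\ref{hat M}) cancels the poles of $V_{\mathfrak{r}}$ at $\pm\i\kappa_j$ while reproducing exactly the jumps $\widehat J_{\mathfrak{r}}$ on $C_j$, $\overline{C}_j$, and that the symmetry relations and the normalization at $k=\infty$ are preserved. Since $\widehat V_{\mathfrak{r}}(x,t;k)\equiv V_{\mathfrak{r}}(x,t;k)$ outside the discs $|k\mp\i\kappa_j|<\varepsilon$, items 1, 2 (the jump on $\Sigma_{\mathfrak{r}}$) and 5 of the list in Section \ref{Sect_RH_problem_1} immediately transfer to $\widehat V_{\mathfrak{r}}$ on the ``elsewhere'' region; hence all the real work is local around the points $\pm\i\kappa_j$.

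First I would establish holomorphy inside $C_j$. Near $k=\i\kappa_j$ only the first component of $V_{\mathfrak{r}}$ is singular: $V_{\mathfrak{r},2}$ is a multiple of $\varphi_+(x,t;k)\e^{-\i g(y,t;k)}$, which is analytic at $\i\kappa_j$ (here $\kappa_j<\tfrac12$ is used, so $\lambda(\i\kappa_j)\neq0$ and $4k^2+1\neq0$ there), whereas $V_{\mathfrak{r},1}$ carries the simple pole of $a_+^{-1}(k)$ (Lemma \ref{lemma properties a(k) b(k)}). Writing $V_{\mathfrak{r},1}(x,t;k)=(k-\i\kappa_j)^{-1}\,\mathrm{Res}_{\,\i\kappa_j}V_{\mathfrak{r},1}+(\text{holomorphic})$ and inserting the pole relation (\ref{pole conditions 1}), which reads $\mathrm{Res}_{\,\i\kappa_j}V_{\mathfrak{r},1}=\i\gamma_{+,j}^2\e^{2\i g(y,t;\i\kappa_j)}V_{\mathfrak{r},2}(x,t;\i\kappa_j)$, one finds
\[
\widehat V_{\mathfrak{r},1}(x,t;k)=V_{\mathfrak{r},1}(x,t;k)-\frac{\i\gamma_{+,j}^2\e^{2\i g(y,t;\i\kappa_j)}}{k-\i\kappa_j}\,V_{\mathfrak{r},2}(x,t;k),
\]
whose principal part at $\i\kappa_j$ equals $(k-\i\kappa_j)^{-1}\bigl[\mathrm{Res}_{\,\i\kappa_j}V_{\mathfrak{r},1}-\i\gamma_{+,j}^2\e^{2\i g(y,t;\i\kappa_j)}V_{\mathfrak{r},2}(x,t;\i\kappa_j)\bigr]=0$, while the second column is untouched and stays analytic. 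Thus $\widehat V_{\mathfrak{r}}$ is holomorphic inside $C_j$. The disc around $-\i\kappa_j$ is treated identically — there it is $V_{\mathfrak{r},2}$, through the zero of $\overline{a_+(\overline{k})}$, that is singular, while $V_{\mathfrak{r},1}\propto\overline{\varphi_+(x,t;\overline{k})}$ is analytic — or, more economically, it follows from the previous case via the symmetry $\overline{V_{\mathfrak{r}}(x,t;\overline{k})}=V_{\mathfrak{r}}(x,t;k)\left(\begin{smallmatrix}0&1\\1&0\end{smallmatrix}\right)$, using that $\gamma_{+,j}^2$ (see (\ref{gamma_+-j})) is real and that $\e^{2\i g(y,t;\i\kappa_j)}$ is real because $g(y,t;\i\kappa_j)$ is purely imaginary.

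For the jump on $C_j$ I would note that, with the chosen orientation, the $+$ side is the interior of the disc, so $\widehat V_{\mathfrak{r}}^+=V_{\mathfrak{r}}\left(\begin{smallmatrix}1&0\\ -\i\gamma_{+,j}^2\e^{2\i g(y,t;\i\kappa_j)}/(k-\i\kappa_j)&1\end{smallmatrix}\right)$ and $\widehat V_{\mathfrak{r}}^-=V_{\mathfrak{r}}$, whence $\widehat V_{\mathfrak{r}}^-=\widehat V_{\mathfrak{r}}^+\left(\begin{smallmatrix}1&0\\ \i\gamma_{+,j}^2\e^{2\i g(y,t;\i\kappa_j)}/(k-\i\kappa_j)&1\end{smallmatrix}\right)$, which is (\ref{J_hat Cj}); formula (\ref{J_hat Cj overline}) is the same computation on $\overline{C}_j$, and the normalization $\widehat V_{\mathfrak{r}}\to(1\ \ 1)$ at $k=\infty$ is inherited from (\ref{M infinity}). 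Finally, the symmetry relations (\ref{symmetries M 1 hat}) follow from (\ref{symmetries M 1}): the involutions $k\mapsto-k$ and $k\mapsto\overline{k}$ swap the discs $|k-\i\kappa_j|<\varepsilon$ and $|k+\i\kappa_j|<\varepsilon$, and, since $\overline{\gamma_{+,j}^2\,\e^{2\i g(y,t;\i\kappa_j)}}=\gamma_{+,j}^2\,\e^{2\i g(y,t;\i\kappa_j)}$ and $g(y,t;-\i\kappa_j)=-g(y,t;\i\kappa_j)$, they carry the lower-triangular multiplier used on one disc exactly into the upper-triangular one used on the other, so that the piecewise definition (\ref{hat M}) respects both symmetries. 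The only genuinely delicate point is making this last consistency check across the three regions of (\ref{hat M}) fully airtight; everything else reduces to the one-line residue cancellation displayed above.
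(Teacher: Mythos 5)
Your verification is correct. The paper states Lemma \ref{lemma RH1 regular} without proof, treating it as the standard pole-removal construction implicit in the definition (\ref{hat M}), and your argument is exactly the intended routine check: the residue conditions (\ref{pole conditions 1}) cancel the simple poles of $a_+^{-1}$ (resp.\ $\overline{a_+(\overline{k})}^{-1}$) inside $C_j$ (resp.\ $\overline C_j$), the jump on the circles is just the inverse of the interior multiplier since $V_{\mathfrak{r}}$ itself is analytic across $C_j$, $\overline C_j$, and the symmetries follow from (\ref{symmetries M 1}) together with the reality of $\gamma_{+,j}^2\e^{2\i g(y,t;\i\kappa_j)}$ and the fact that the involutions $k\mapsto -k$, $k\mapsto\overline k$ exchange the two discs and conjugate lower- into upper-triangular multipliers.
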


\section{Riemann--Hilbert problem (2)}\label{Sect_RH_problem_2}
Alternative Riemann-Hilbert problem can be stated for spectral
parameter $z$ in assumption that $l=1$ in (\ref{conditions
u(x,t)}). Let us define sectionally-meromorphic function

\[V_{\mathfrak{l}}(x,t;z)=\]
\begin{equation}\label{M_varphi_z}
=\left\{\begin{array}{l}
\sqrt[4]{\dsfrac{m+\omega}{\omega}}\(\dsfrac{1}{a_-(k(z))}\,\varphi_+(x,t;k(z))\e^{-\i
\widetilde g(y,t;z)}\ , \ \varphi_-(x,t;k(z))\e^{\i \widetilde
g(y,t;z)}\), \\\\\hfill \hfill\hfill\hfill\hfill\hfill \Im z>0,
\\
\\
\sqrt[4]{\dsfrac{m+\omega}{\omega}}\(\overline{\varphi_-(x,t;\overline{k(z)})}\e^{-\i
\widetilde g(y,t;z)}\ , \
\dsfrac{1}{\overline{a_-(\overline{k(z)})}}\
\overline{\varphi_+(x,t;\overline{k(z)})}\e^{\i \widetilde
g(y,t;z)}\),\\\\\hfill
\hfill\hfill\hfill\hfill\hfill\hfill\hfill\Im z<0,
\end{array}\right.
\end{equation}
\noindent where $\widetilde
g(y,t;z)=z(y+\wp)-\frac{2\sqrt{(c+\omega)\omega\ } z
t}{4z^2+\frac{\omega}{c+\omega}}.$

Scattering relations (\ref{scattering relations varphi}) and lemma
\ref{lemma properties a(k) b(k)} implies that
$V_{\mathfrak{l}}(x,t;z)$ satisfies the following conjugation
problem:

\begin{enumerate}
\item $V_{\mathfrak{l}}(x,t;.)$ is meromorphic in
$\mathbb{C}\setminus\mathbb{R}$ and continuous up to the boundary;
\item $V_{\mathfrak{l}}^-(x,t;z)=V_{\mathfrak{l}}^+(x,t;z)
J_{\mathfrak{l}}(x,t;z),$ where
\begin{equation}\nonumber
J_{\mathfrak{l}}(x,t;z)=\begin{pmatrix}1&\dsfrac{\overline{b_-(k(z))}}{\overline{a_-(k(z))}}\,\,\e^{2\i
\widetilde g(y,t;z)}\\-\dsfrac{b_-(k(z))}{a_-(k(z))}\,\e^{-2\i
\widetilde g(y,t;z)}&\dsfrac{k(z)}{z\,|a_-(k(z))|^2}\end{pmatrix},
\end{equation}
\begin{equation}\label{widetilde J_R}\hfill\quad
z\in\(-\infty,-\sqrt{\frac{c}{4(c+\omega)}}\)\cup\(\sqrt{\frac{c}{4(c+\omega)}},+\infty\);
\end{equation}

\begin{equation}\nonumber
J_{\mathfrak{l}}(x,t;z)=\begin{pmatrix}1&\dsfrac{\overline{b_-(k(z+\i
0))}}{\overline{a_-(k(z+\i 0))}}\,\,\e^{-2\i \widetilde
g(y,t;z)}\\-\dsfrac{b_-(k(z+\i 0))}{a_-(k(z+\i 0))}\,\e^{2\i
\widetilde g(y,t;z)}&0\end{pmatrix}, \end{equation}
\begin{equation}\label{widetilde J_0_ic}\hfill\quad z\in\(-\sqrt{\frac{c}{4(c+\omega)}},
\sqrt{\frac{c}{4(c+\omega)}}\);
\end{equation}
\item The pole relation: for $j=1,...,N$\\
\begin{subequations}\label{pole conditions 2}
\begin{eqnarray}
\mathrm{Res}_{\i
z_j}V_{\mathfrak{l}}(x,t;k)=\lim\limits_{z\rightarrow
z_j}V_{\mathfrak{l}}(x,t;z)\begin{pmatrix}0&0\\\i\gamma_{-,j}^2\e^{-2\i
\widetilde g(y,t;z_j)}&0\end{pmatrix},\qquad\ \ \\
\nonumber\mathrm{Res}_{-\i
z_j}V_{\mathfrak{l}}(x,t;z)=\lim\limits_{z\rightarrow-z_j}V_{\mathfrak{l}}(x,t;z)\begin{pmatrix}0&-\i\gamma_{-,j}^2\e^{-2\i
\widetilde g(y,t;z_j)}\\0&0\end{pmatrix};\\
\end{eqnarray}
\end{subequations}
\item Symmetry relations:\\
\begin{equation}\label{symmetries M 2}
\overline{V_{\mathfrak{l}}(x,t;\overline{z})}=V_{\mathfrak{l}}(x,t;-z)=V_{\mathfrak{l}}(x,t;z)\begin{pmatrix}0&1\\1&0\end{pmatrix},
\end{equation}
\[\overline{V_{\mathfrak{l}}(x,t;-\overline{z})}=V_{\mathfrak{l}}(x,t;z);\]

\item $V_{\mathfrak{l}}(x,t;z)\rightarrow
\begin{pmatrix}1&1\end{pmatrix} $ as $z\rightarrow\infty.$
\end{enumerate}

\section{Reconstruction of the Camassa--Holm solution from RH
problem}\label{sect reconstruction of potential}

The following lemma reads exactly as its analogue in vanishing
case (\cite[lemma 3.5]{Kostenko_Shepelsky_Teschl_2009}), but proof
should be modified.
\begin{lem}Functions $V_{\mathfrak{r}}(x,t;k)$ and $V_{\mathfrak{l}}(x,t;z)$ defined in
(\ref{M_psi})and (\ref{M_varphi_z}), respectively, satisfy the
following relations:
\begin{equation}\label{M1/M2}\dsfrac{V_{\mathfrak{r},\,1}(x,t;\frac{\i}{2})}{V_{\mathfrak{r},\,2}(x,t;\frac{i}{2})}=\e^{x-y},\end{equation}
\begin{equation}\label{M1*M2}V_{\mathfrak{r},\,1}(x,t;k)V_{\mathfrak{r},\,2}(x,t;k)=\sqrt{\frac{m+\omega}{\omega}}\(1+\dsfrac{2\i}{\omega}u(x,t)\(k-\frac{\i}{2}\)
+O\(k-\frac{\i}{2}\)^2\),
\end{equation}
\[\hfill k\rightarrow\frac{\i}{2},\]
and
\begin{equation}\label{M1/M2 tilde}\dsfrac{V_{\mathfrak{l},\,1}(x,t;\frac{\i}{2}\sqrt{\frac{\omega}{c+\omega}})}
{
V_{\mathfrak{l},\,2}(x,t;\frac{\i}{2}\sqrt{\frac{\omega}{c+\omega}})}=\e^{\sqrt{\frac{\omega}{c+\omega}}\(y+\wp\)-x+ct}=\exp\left\{
\int\limits_{-\infty}^x\(\sqrt{\frac{m+\omega}{c+\omega}}-1\)\d
r\right\},\end{equation}
\[V_{\mathfrak{l},\,1}(x,t;z)V_{\mathfrak{l},\,2}(x,t;z)=\]
\begin{equation}\label{M1*M2 tilde}=\sqrt{\frac{m+\omega}{c+\omega}}
\(1+\dsfrac{2\i
\(u(x,t)-c\)}{\sqrt{\omega(c+\omega)}}\(z-\frac{\i}{2}\sqrt{\frac{\omega}{c+\omega}}\)
+O\(z-\frac{\i}{2}\sqrt{\frac{\omega}{c+\omega}}\)^2\),\end{equation}
\[\hfill z\rightarrow\frac{\i}{2}\sqrt{\frac{\omega}{c+\omega}}.\]
\end{lem}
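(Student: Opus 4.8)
The plan is to evaluate the four quantities by plugging the explicit definitions (\ref{M_psi}), (\ref{M_varphi_z}) of $V_{\mathfrak r}$, $V_{\mathfrak l}$ into the left-hand sides and using the asymptotic expansions of the Jost solutions from Lemma \ref{lemma Jost solutions_Sturm-Liouville} as $k\to\frac{\i}{2}$ (equivalently $z\to\frac{\i}{2}\sqrt{\omega/(c+\omega)}$, which is the value of $z$ at $k=\frac{\i}{2}$, since $\lambda=\frac14$ there). The crucial observation is that at $k=\frac{\i}{2}$ we have $\lambda=k^2+\frac14=\frac12\cdot\frac12\cdot\ldots$, wait — rather $\lambda=0$... no: $\lambda=k^2+\frac14=-\frac14+\frac14=0$. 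So $k=\frac{\i}{2}$ is the bottom of the continuous spectrum of the original Sturm--Liouville problem (\ref{Shturm_x-eq}), where $\lambda=0$; at this point the exponential factors $\e^{\i\omega kt/2\lambda}$ in (\ref{asymptotics varphi + x}) and the $t$-dependent phases become singular individually, but the combinations appearing in $V_{\mathfrak r}$ (namely $\varphi_\pm$ multiplied by $\e^{\pm\i g(y,t;k)}$ with $g=ky-\frac{2\omega kt}{4k^2+1}$) are regular, because $g$ is designed precisely to cancel the $\frac{\omega kt}{2\lambda}=\frac{2\omega kt}{4k^2+1}$ term in the phase. I would first record this cancellation carefully, so that near $k=\frac{\i}{2}$ one has, using (\ref{asymptotics varphi + k}) and (\ref{y_x}),
\[
\sqrt[4]{\tfrac{m+\omega}{\omega}}\,\varphi_+(x,t;k)\,\e^{-\i g(y,t;k)}\xrightarrow[k\to\i/2]{}\e^{-\tfrac{x}{2}+O(k-\i/2)}\cdot(\text{known series}),
\]
and similarly for the $\varphi_-$ component, where the $\frac{x}{2}$ arises from $\e^{\i k x}$ at $k=\frac{\i}{2}$ together with $y=x-\int_x^{+\infty}(\sqrt{(m+\omega)/\omega}-1)$.

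Next I would establish (\ref{M1/M2}): in the first column of (\ref{M_psi}), $V_{\mathfrak r,1}=\sqrt[4]{\tfrac{m+\omega}{\omega}}\,\frac{1}{a_+(k)}\varphi_-(x,t;k)\e^{\i g}$ and $V_{\mathfrak r,2}=\sqrt[4]{\tfrac{m+\omega}{\omega}}\,\varphi_+(x,t;k)\e^{-\i g}$ are the values for $\Im z>0$; at $k=\frac{\i}{2}$ (where $z=\frac{\i}{2}\sqrt{\omega/(c+\omega)}$ lies in $\Im z>0$, and $a_+(\frac{\i}{2})$ is finite and nonzero by Lemma \ref{lemma Wronskian} and the discussion after Lemma \ref{lemma properties a(k) b(k)}) the ratio is
\[
\frac{V_{\mathfrak r,1}}{V_{\mathfrak r,2}}
=\frac{1}{a_+(\tfrac{\i}{2})}\,\frac{\varphi_-(x,t;\tfrac{\i}{2})}{\varphi_+(x,t;\tfrac{\i}{2})}\,\e^{2\i g(y,t;\i/2)}.
\]
Now $\lambda(\frac{\i}{2})=0$ means that at the bottom of the spectrum the $t$-dependent exponentials in (\ref{asymptotics varphi - x})--(\ref{asymptotics varphi + x}) conspire so that $\varphi_\pm$ reduce, up to the $\sqrt[4]{(m+\omega)/\omega}$ prefactor, to solutions of $-\varphi_{xx}+\frac14\varphi=0$, i.e. $\varphi\sim\e^{\pm x/2}$; tracking the $x$-dependence through (\ref{y_x}) and $g(y,t;\frac{\i}{2})=\frac{\i}{2}y-\frac{\i\omega t}{2}$, the product $\frac{\varphi_-}{\varphi_+}\e^{2\i g}$ telescopes to $\e^{x-y}$ (the $t$-terms cancelling against the $t$-term in $g$, and $a_+(\frac{\i}{2})$ cancelling because at $\lambda=0$ the two Jost solutions have matching normalizations — this is exactly the content encoded in property III of Lemma \ref{symmetries a b} restricted to the endpoint). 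The identity (\ref{M1/M2 tilde}) is the mirror computation with $V_{\mathfrak l}$, $\widetilde g$, and $z=\frac{\i}{2}\sqrt{\omega/(c+\omega)}$ (again $\lambda=\frac14$... rather $\lambda=z^2+\frac{\omega}{4(c+\omega)}=\frac14$ there, so $k=\frac{\i}{2}$ too), using the representation (\ref{y_x_-}) of $y$ through the left endpoint to produce the exponent $\int_{-\infty}^x(\sqrt{(m+\omega)/(c+\omega)}-1)\d r$.

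For the product formulas (\ref{M1*M2}) and (\ref{M1*M2 tilde}) I would multiply $V_{\mathfrak r,1}V_{\mathfrak r,2}=\sqrt{\tfrac{m+\omega}{\omega}}\cdot\frac{1}{a_+(k)}\varphi_-(x,t;k)\varphi_+(x,t;k)$ — note the $\e^{\pm\i g}$ cancel in the product — and Taylor-expand around $k=\frac{\i}{2}$. The constant term must be $\sqrt{(m+\omega)/\omega}$, which forces $\frac{1}{a_+(\i/2)}\varphi_-(x,t;\tfrac{\i}{2})\varphi_+(x,t;\tfrac{\i}{2})=1$; this is a Wronskian-type normalization at $\lambda=0$ that one reads off from $W(\frac{\i}{2})=2\i\cdot\frac{\i}{2}a_+(\frac{\i}{2})$ together with an explicit evaluation of $\varphi_-\varphi_+$ at the spectral edge. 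The linear coefficient is where $u(x,t)$ enters: differentiating the product of the two Jost asymptotics in $k$ at $k=\frac{\i}{2}$, the $O(k^{-1})$-type corrections in (\ref{asymptotics varphi - k}), (\ref{asymptotics varphi + k}) contribute integrals $\int v$, but the dominant linear-in-$(k-\frac{\i}{2})$ term comes from differentiating the exponential phases $\e^{\pm\i k(x-\cdots)}$ and the $\lambda$-dependent $t$-phase; assembling these and using $u=u(x,t)$ together with the definition $\lambda=k^2+\frac14$ (so $\frac{d\lambda}{dk}\big|_{\i/2}=\i$) yields the coefficient $\frac{2\i}{\omega}u(x,t)$ after simplification. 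The $z$-version is analogous, with $c$ subtracted from $u$ because the expansion is now at the left-hand spectral edge where the reference constant is $c_l=c$.

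The main obstacle I anticipate is bookkeeping the cancellation of the singular $1/\lambda$ phase factors at the spectral edge and verifying that $a_+(\frac{\i}{2})$ and $a_-$ at the corresponding point are regular and that the normalization $\frac{1}{a_+}\varphi_-\varphi_+\to1$ holds — i.e. that $k=\frac{\i}{2}$ is not a zero of $W$; by Lemma \ref{lemma Wronskian} the zeros of $W$ lie strictly inside $(\i\sqrt{c/4(c+\omega)},\frac{\i}{2})$, so $W(\frac{\i}{2})\ne0$, which is exactly what is needed. The second delicate point is extracting the $u(x,t)$-coefficient cleanly: one must be careful that the integral terms $\int v$ in the Jost expansions do not contaminate the linear coefficient, which requires matching them against the $x$-derivative structure of (\ref{y_x}) — this is the same computation as in \cite[Lemma 3.5]{Kostenko_Shepelsky_Teschl_2009} in the decaying case, the only modification being the replacement of $0$ by $c$ at the left edge and the use of the conservation law (\ref{wp}) to re-express $y$ via the left endpoint.
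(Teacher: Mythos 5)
Your overall strategy --- evaluate (\ref{M_psi}), (\ref{M_varphi_z}) at the spectral edge $k=\frac{\i}{2}$, where $\lambda=0$ makes (\ref{Shturm_x-eq}) explicitly solvable, and exploit the cancellation of the singular $t$-phases in the combinations $\varphi_{\pm}\e^{\mp\i g}$ --- is the same as the paper's, and your treatment of the leading order, hence of (\ref{M1/M2}) and (\ref{M1/M2 tilde}), is essentially on track. Two caveats already at that level: $a_+(\frac{\i}{2})$ is not supplied by property III of Lemma \ref{symmetries a b} (that identity only relates $a_+$ and $a_-$ at the edge); its value $\sqrt[4]{\omega/(c+\omega)}$ has to be computed, as in the Wronskian expansion (\ref{a_+_as_i/2}); and the statement ``the constant term must be $\sqrt{(m+\omega)/\omega}$, which forces $\varphi_-\varphi_+/a_+\to1$'' is circular as written --- that normalization is a conclusion of the computation, not an input.

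The genuine gap is in the part that matters most, the first-order coefficients in (\ref{M1*M2}) and (\ref{M1*M2 tilde}). You propose to read the linear-in-$(k-\frac{\i}{2})$ term from (\ref{asymptotics varphi - k}), (\ref{asymptotics varphi + k}) together with differentiation of the exponential phases. But those formulas are asymptotic expansions as $k\to\infty$; they carry no information about the Taylor coefficients of $\varphi_{\pm}$ at the finite point $k=\frac{\i}{2}$, so this bookkeeping is built on invalid ground. Moreover, the claim that the dominant linear term ``comes from differentiating the exponential phases'' while the amplitude corrections do not contaminate it is false: the phase derivatives only produce terms linear in $x$, $y$, $t$ (which in the end cancel), whereas the $u(x,t)$-dependence enters precisely through the $k$-derivatives of the amplitudes and of $1/a_+$ at the edge. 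What is needed, and what the paper supplies, is an explicit representation of these derivatives: writing $\varphi_{\pm}$ through $\mu_{\pm}$ solving the Volterra equations (\ref{mu_+_eq}), (\ref{mu_-_eq}), whose kernels carry the factor $k^2+\frac14$, gives $\mu_{\pm}(x,t;\frac{\i}{2})=1$ and the explicit first-order coefficients $F_{\pm}$ of (\ref{F_+(x,t)}), (\ref{F_-(x,t)}), while the Wronskian expansion yields (\ref{a_+_as_i/2}) with the conserved quantity $H_0[u]$ of (\ref{H_0[u]}); only after combining $F_{\pm}$, $F_{\pm}'$ and $H_0[u]$ with $u=(1-\partial_x^2)^{-1}m=\frac12\int_{\mathbb{R}}\e^{-|x-r|}m(r,t)\,\d r$ does the coefficient $\frac{2\i}{\omega}u$ (and $u-c$ at the left edge) emerge. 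Without this, or an equivalent computation of $\partial_k\varphi_{\pm}$ and $\partial_k a_{\pm}$ at the spectral edge, your argument does not establish (\ref{M1*M2}), (\ref{M1*M2 tilde}), which are the formulas through which $u$ is actually reconstructed.
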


\begin{Proof}
The Jost solutions $\varphi_{\pm}(x,t;k)$ can be represented as
follows:
$$\varphi_+(x,t;k)=\mu_+(x,t;k)\exp\{\i
kx-\frac{2\i\omega k t}{4k^2+1}\},\ $$
$$\varphi_-(x,t;k)=\sqrt[4]{\frac{\omega}{c+\omega}}\mu_-(x,t;k)
\exp\left\{-\i\sqrt{\frac{c+\omega}{\omega}\ }x z+\i
\sqrt{\frac{c+\omega}{\omega}\ }t
z\(c+\frac{2\omega}{4k^2+1}\)\right\},$$

\noindent where $\mu_{\pm}(x,t;k)$ are the solutions of the
integral equations
\begin{equation}\label{mu_+_eq}
\mu_+(x,t;k)=1+\frac{k^2+\frac{1}{4}}{2\i
k\omega}\int\limits_x^{+\infty}\(1-\e^{-2\i
k(x-r)}\)\mu_+(r,t;k)m(r,t)\d r,
\end{equation}

\begin{equation}\label{mu_-_eq}
\mu_-(x,t;k)=1+\frac{k^2+\frac{1}{4}}{2\i z\sqrt{\omega(c+\omega)\
}}\int\limits^x_{-\infty}\(1-\e^{2\i
z(x-r)\sqrt{\frac{c+\omega}{\omega}\
}}\)\mu_-(r,t;k)\(m(r,t)-c\)\d r.
\end{equation}
Here the values of $\mu_{\pm}$ at the point $k=\frac{\i}{2}$ can
be determined due to the observation that
(\ref{Lax_representation_CH}a) becomes explicitly solvable at
$\lambda=0.$

Existence and uniqueness of solutions of
(\ref{mu_-_eq}),(\ref{mu_+_eq}) is established, for example, in
 \cite{Marchenko_1972_1986}.
Moreover, $\mu_+$ is analytic for $\Im k>0,\ $ $\mu_-$ is analytic
for $\Im z(k)>0.$

Since $k^2+\frac{1}{4}=(k-\frac{\i}{2})(k+\frac{\i}{2}),$ we get
\begin{equation}\label{mu_+_as_i/2}
\mu_+(x,t;k)=1+\frac{\i}{\omega}F_+(x,t)\(k-\frac{\i}{2}\)+O\(k-\frac{\i}{2}\)^2,
\quad k\rightarrow\frac{\i}{2},
\end{equation}

\begin{equation}\label{mu_-_as_i/2}
\mu_-(x,t;k)=1-\frac{\i}{\omega}F_-(x,t)\(k-\frac{\i}{2}\)+O\(k-\frac{\i}{2}\)^2,
\quad k\rightarrow\frac{\i}{2},
\end{equation}

where
\begin{equation}\label{F_+(x,t)}
F_+(x,t)=\int\limits_x^{+\infty}\(\e^{x-r}-1\)m(r,t)\d r,
\end{equation}
\begin{equation}\label{F_-(x,t)}
F_-(x,t)=\int\limits^x_{-\infty}\(\e^{-x+r}-1\)\(c-m(r,t)\)\d r.
\end{equation}

Differentiating with respect to $x$, we get

\begin{equation}\label{mu'_+_as_i/2}
\mu'_+(x,t;k)=\frac{\i}{\omega}F'_+(x,t)\(k-\frac{\i}{2}\)+O\(k-\frac{\i}{2}\)^2,
\quad k\rightarrow\frac{\i}{2},
\end{equation}

\begin{equation}\label{mu'_-_as_i/2}
\mu'_-(x,t;k)=-\frac{\i}{\omega}F'_-(x,t)\(k-\frac{\i}{2}\)+O\(k-\frac{\i}{2}\)^2,
\quad k\rightarrow\frac{\i}{2},
\end{equation}

with
\begin{equation}\label{F'_+(x,t)}
F'_+(x,t)=\int\limits_x^{+\infty}\e^{x-r}m(r,t)\d r,
\end{equation}
\begin{equation}\label{F'_-(x,t)}
F'_-(x,t)=-\int\limits^x_{-\infty}\e^{-x+r}\(c-m(r,t)\)\d r.
\end{equation}

Next, straightforward calculations show that
\[a_+(k)=\frac{W\{\varphi_-(x,t;k),\varphi_+(x,t;k)\}}{2\i k}=
\sqrt[4]{\frac{\omega}{c+\omega}}\(1+\right.\]\[\left.+\dsfrac{\i}{\omega}\left[F_+-F_--F'_+-F'_--c(x+1)+ct\(\frac{3c}{2}+2\omega\)\right]
\(k-\frac{\i}{2}\)+\mathrm{O} \(k-\frac{\i}{2}\)^2\)=\]
\begin{equation}\label{a_+_as_i/2}=\sqrt[4]{\frac{\omega}{c+\omega}}\left\{1+\frac{\i}{\omega}H_0[u]
\(k-\frac{\i}{2}\)+O\(k-\frac{\i}{2}\)^2\right\},\end{equation}
where
\begin{equation}\label{H_0[u]}
H_0[u]=\int_{-\infty}^x \(c-m(r,t)\)\d
r-\int\limits_x^{+\infty}m(r,t)\d
r-c(x+1)+\frac{c(3c+4\omega)t}{2}
\end{equation}
is a conserved quantity of the CH equation.

Substituting (\ref{mu_-_as_i/2}), (\ref{mu_+_as_i/2}) and
(\ref{a_+_as_i/2}) into (\ref{M_psi}), and taking into account
\[u(x,t)=(1-\partial_x^2)^{-1}m(x,t)=\frac{1}{2}\int\limits_{\mathbb{R}}\e^{-|x-r|}m(r,t)\d r,\]
we come to
\[V_{\mathfrak{r},\,1}(x,t;k)=\sqrt[4]{\frac{m+\omega}{\omega}\ }\e^{(x-y)/2\
}\(1+\left[
\frac{-\i}{\omega}H_0[u]-\frac{\i}{\omega}F_-(x,t)+\right.\right.\]
\[\left.\left.\qquad-\i x\frac{c+\omega}{\omega}+\i y+\frac{\i ct(3c+4\omega)}{2\omega}\right]\(k-\frac{\i}{2}\)
+O\(k-\frac{\i}{2}\)^2 \),\]

\[V_{\mathfrak{r},\,2}(x,t;k)=\]\[=\sqrt[4]{\frac{m+\omega}{\omega}}\e^{(y-x)/2}\(1+\(\frac{\i}{\omega}F_+(x,t)+\i(x-y)\)\(k-\frac{\i}{2}\)+O\(k-\frac{\i}{2}\)^2\),\]
and thus to (\ref{M1/M2}), (\ref{M1*M2}). Formulas (\ref{M1/M2
tilde}), (\ref{M1*M2 tilde}) can be proved in analogous manner.
\end{Proof}

\section{Uniqueness.}\subsection{Uniqueness of RH problems by the right scattering data.}
In this section we prove the uniqueness of the RH problems 1.-5.
of the section \ref{Sect_RH_problem_1} under assumtion that the
classical  solution of the Cauchy problem (\ref{CH}),
(\ref{init_cond}), (\ref{conditions u(x,t)}), (\ref{m+omega})
exists for all values of time. Then we can construct the
matrix-valued function
$M_{\mathfrak{r}}(x,t;k)=\begin{pmatrix}M_{\mathfrak{r}
,11}(x,t;k)&M_{\mathfrak{r} ,12}(x,t;k)\\M_{\mathfrak{r}
,21}(x,t;k)&M_{\mathfrak{r} ,22}(x,t;k)\end{pmatrix}$ by the
formulas
\begin{equation}\label{M_matrix_right_upper}
\begin{pmatrix}
\frac{1}{2a_+(k)}\(\sqrt[4]{\frac{m+\omega}{\omega}}\varphi_--\frac{1}{\i
k}\sqrt[4]{\frac{\omega}{m+\omega}}\varphi_{-,x}'\)\e^{\i g}&
\frac{1}{2}\(\sqrt[4]{\frac{m+\omega}{\omega}}\varphi_+-\frac{1}{\i
k}\sqrt[4]{\frac{\omega}{m+\omega}}\varphi_{+,x}'\)\e^{-\i g}
\\
\frac{1}{2a_+(k)}\(\sqrt[4]{\frac{m+\omega}{\omega}}\varphi_-+\frac{1}{\i
k}\sqrt[4]{\frac{\omega}{m+\omega}}\varphi_{-,x}'\)\e^{\i g}&
\frac{1}{2}\(\sqrt[4]{\frac{m+\omega}{\omega}}\varphi_++\frac{1}{\i
k}\sqrt[4]{\frac{\omega}{m+\omega}}\varphi_{+,x}'\)\e^{-\i g}
\nonumber\end{pmatrix},
\end{equation}
\begin{equation}\hfill\hfill\hfill\hfill \Im z(k)>0,
\end{equation}
\begin{equation}\label{M_matrix_right_lower}
\begin{pmatrix}
\frac{1}{2}\(\sqrt[4]{\frac{m+\omega}{\omega}}\ol{\varphi_+}-\frac{1}{\i
k}\sqrt[4]{\frac{\omega}{m+\omega}}\ol{\varphi_{+,x}'}\)\e^{\i g}
&\frac{1}{2\ol{a_+(\ol
k)}}\(\sqrt[4]{\frac{m+\omega}{\omega}}\ol{\varphi_-}-\frac{1}{\i
k}\sqrt[4]{\frac{\omega}{m+\omega}}\ol{\varphi_{-,x}'}\)\e^{-\i g}
\\
\frac{1}{2}\(\sqrt[4]{\frac{m+\omega}{\omega}}\ol{\varphi_+}+\frac{1}{\i
k}\sqrt[4]{\frac{\omega}{m+\omega}}\ol{\varphi_{+,x}'}\)\e^{\i g}
&\frac{1}{2\ol{a_+(\ol
k)}}\(\sqrt[4]{\frac{m+\omega}{\omega}}\ol{\varphi_-}+\frac{1}{\i
k}\sqrt[4]{\frac{\omega}{m+\omega}}\ol{\varphi_{-,x}'}\)\e^{-\i g}
\nonumber\end{pmatrix},
\end{equation}
\begin{equation}\hfill\hfill\hfill\hfill \Im z(k)<0,
\end{equation}

Here we denote $g=g(y,t;k)=k y-\frac{2\omega k t}{4k^2+1},$
$\varphi_{\pm}=\varphi_{\pm}(x,t;k),$
$\ol{\varphi_{\pm}}=\ol{\varphi_{\pm}(x,t;\ol k)}.$ Matrix-valued
function $M_{\mathfrak{r}}$ has the following properties:
\begin{enumerate}[1.]
\item Matrix-valued function $M_{\mathfrak{r}}(x,t;k)$ is
meromorphic in the domain
$\mathbb{C}\backslash\Sigma_{\mathfrak{r}}.$
\end{enumerate}
\begin{enumerate}[2.]\item Matrix-valued function $M_{\mathfrak{r}}(x,t;k)$ satisfies
the same jump relations as the vector-valued function
(\ref{M_psi}).
\end{enumerate}
\begin{enumerate}[3.]
\item $M_{\mathfrak{r}}$ satisfies the same pole relations as the
vector-valued function (\ref{M_psi}).
\end{enumerate}
Indeed, the derivatives of the Jost solutions $\varphi_{\pm}'$
satisfy the same scattering relations (\ref{scattering relations
varphi}) as the Jost solutions. The function $k$ does not have any
jump along the contour $\Sigma_{\mathfrak{r}},$ hence any linear
combination of Jost solutions and its derivatives satisfies the
same jump conditions as the vector-valued function (\ref{M_psi}),
and in particular, this holds for $M_{\mathfrak{r}}(x,t;k)$.

Asymptotics for the derivatives of the Jost solutions is given by
formal differentiation of formulas (\ref{asymptotics varphi - k}),
(\ref{asymptotics varphi + k}), so the matrix-valued function
$M_{\mathfrak{r}}(x,t;k)$ tends to the identity matrix as
$k\rightarrow\infty.$


$M_{\mathfrak{r}}$ has the following behavior at the point $k=0$:
\begin{enumerate}[1.]
\item\hskip-2mm a $\begin{pmatrix}\mathrm{O}\(1\)&
\frac{-\alpha_{\pm}}{\i
k}+\mathrm{O}\(1\)\\
\mathrm{O}\(1\)&\frac{\alpha_{\pm}}{\i
k}+\mathrm{O}\(1\)\end{pmatrix},\quad
k\rightarrow0, \Im k>0, \pm\Re k\geq0;$\\
$\begin{pmatrix}\frac{-\alpha_{\pm}}{\i
k}+\mathrm{O}\(1\)&\mathrm{O}\(1\)\\
\frac{\alpha_{\pm}}{\i
k}+\mathrm{O}\(1\)&\mathrm{O}\(1\)\end{pmatrix},\quad
k\rightarrow0, \Im k<0, \pm\Re k\geq0;$
\end{enumerate}
 and satisfies the
following symmetry relations and asymptotics at the infinity:
\begin{enumerate}
\setcounter{enumi}{3}
\item \hskip-2mm(a) Symmetry relations:\\
\begin{equation}\label{symmetries M 1 matrix}
\overline{
M_{\mathfrak{r}}(x,t;\overline{k})}=M_{\mathfrak{r}}(x,t;-k)=
\begin{pmatrix}0&1\\1&0\end{pmatrix}M_{\mathfrak{r}}(x,t;k)\begin{pmatrix}0&1\\1&0\end{pmatrix},
\end{equation}
\[\overline{M_{\mathfrak{r}}(x,t;-\overline{k})}=M_{\mathfrak{r}}(x,t;k);\]
 \item \hskip-2mm(a) \begin{equation}\label{asymptotics M 1 matrix} M_{\mathfrak{r}}(x,t;k)\rightarrow \begin{pmatrix}1&0\\0&1\end{pmatrix}
 \textrm{ as
} k\rightarrow\infty.\end{equation}
\end{enumerate}

It is easy to show that the solution of the RH problem 1.-5. is
unique. (Let us notice, that this RH problem does not have
Schwartz symmetry:
$J_{\mathfrak{r}}(k)J^*_{\mathfrak{r}}(\ol{k})\neq I.$ This makes
possible the uniqueness of a non-regular solution.)First of all,
without loss of generality we can assume absence of poles at
$\pm\i \kappa_j$, because we can transform the RH problem to a
regular one as in (\ref{hat M}). Further, as $\det
J_{\mathfrak{r}}=1$, $det M_{\mathfrak{r}}$ is a meromorphic
function with a possible simple pole at the point $k=0$. As
$k\rightarrow\infty$, $\det M_{\mathfrak{r}}\rightarrow 1.$ Hence,
$\det M_{\mathfrak{r}}=1+\frac{b}{k}.$ Due to symmetry condition
4(a) we get that $\det M_{\mathfrak{r}}(-k)=\det
M_{\mathfrak{r}}(k)$, and hence $b=0$, so $\det
M_{\mathfrak{r}}=1$. If the RH problem 1.-5. has another solution
$\widetilde M_{\mathfrak{r}},$ then $M_{\mathfrak{r}}{\widetilde
M_{\mathfrak{r}}}^{-1}$ can have only simple pole at the origin:
it follows from the structure 1.a of poles at the origin. So,
$M_{\mathfrak{r}}{\widetilde M_{\mathfrak{r}}}^{-1}$ is a
meromorphic matrix-valued function with a possible simple pole at
the point $k=0$, hence $M_{\mathfrak{r}}{\widetilde
M_{\mathfrak{r}}}^{-1}=I+\frac{A}{\i k}$, where
$A=\begin{pmatrix}A_{11}&A_{12}\\A_{21}&A_{22}\end{pmatrix}.$ Due
to the symmetry properties 4(a), we get that the matrix $A$ has
the following structure: $A_{ij}\in\mathbb{R}, A_{11}=-A_{22},
A_{21}=-A_{12}.$ The determinant of $I+\frac{A}{\i k}$ equals 1,
this gives $A_{11}^2+A_{21}^2=0$ and so it proves the uniqueness
of the matrix RH problem 1.-5.

With the help of this matrix function $M_{\mathfrak{r}}$ we can
prove the uniqueness of the vector RH problem 1.-5. from the
section \ref{Sect_RH_problem_1}. Indeed, as $V_{\mathfrak{r}}$ and
$M_{\mathfrak{r}}$ satisfy the same jump conditions, their
quotient $V_{\mathfrak{r}} M_{\mathfrak{r}}^{-1}$ does not have
any jump along the contour $\Sigma_{\mathfrak{r}}$, and therefore
$V_{\mathfrak{r}} M_{\mathfrak{r}}^{-1}$ is a meromorphic
vector-valued function with a possible simple pole at the point
$k=0$ and asymptotics at the infinity equals $(1,1)$. Due to the
symmetry relation 4. of the section \ref{Sect_RH_problem_1} and
due to the symmetry relation 4.a of the present section, we get
that $V_{\mathfrak{r}} M_{\mathfrak{r}}^{-1}$ satisfies the
symmetry relations 4. of the section \ref{Sect_RH_problem_1}. So,
\begin{equation}\label{VM^-1}V_{\mathfrak{r}} M_{\mathfrak{r}}^{-1}=\(1+\frac{A}{\i k},1-\frac{A}{\i k}\).
\end{equation}

According to the property 1.a, the function
$M_{\mathfrak{r}}(x,t;k)$ has the following behavior as
$k\rightarrow0,$ $\Im k>0$, $\Re k>0$:
\begin{equation}\label{M_r k=0}M_{\mathfrak{r}}(k)=\begin{pmatrix}\beta+\mathrm{O}(k)&\frac{-\alpha}{\i
k}+x+\mathrm{O}(k)\\
\gamma+\mathrm{O}(k)&\frac{\alpha}{\i
k}+y+\mathrm{O}(k)\end{pmatrix},\end{equation} where
$\alpha\in\mathbb{R}.$ From the fact that $\det
M_{\mathfrak{r}}(x,t;k)=1$, we get
\begin{equation}\label{M_r eq_1}
\alpha(\beta+\gamma)=0.
\end{equation}
Let us multiply (\ref{VM^-1}) from the right by (\ref{M_r k=0}).
We have
$$V_{\mathfrak{r}}=\(V_{\mathfrak{r}} M_{\mathfrak{r}}^{-1}\)M_{\mathfrak{r}}(k)=\(1+\frac{A}{\i
k},1-\frac{A}{\i
k}\)\begin{pmatrix}\beta+\mathrm{O}(k)&\frac{-\alpha}{\i
k}+x+\mathrm{O}(k)\\
\gamma+\mathrm{O}(k)&\frac{\alpha}{\i
k}+y+\mathrm{O}(k)\end{pmatrix}=$$
$$=\begin{pmatrix}
\frac{A(\beta-\gamma)}{\i k}+\mathrm{O}(1),
\frac{2A\alpha}{k^2}+\frac{A(x-y)}{\i
k}+\mathrm{O}(1)\end{pmatrix},$$ and, taking into account
boundedness of $V_{\mathfrak{r}}$ as $k\rightarrow 0$, we have
\begin{equation}\label{M_r eq_2}
A(\beta-\gamma)=0,\quad A\alpha=0,\quad A(x-y)=0.
\end{equation}

Consider the system of equations (\ref{M_r eq_1}), (\ref{M_r
eq_2}). Suppose first that $A\neq0.$ Then $$\alpha=0,\quad
\beta=\gamma, \quad x=y,$$ and formula (\ref{M_r k=0}) reads as
$$M_{\mathfrak{r}}(k)=\begin{pmatrix}\beta+\mathrm{O}(k)&x+\mathrm{O}(k)\\
\beta+\mathrm{O}(k)&x+\mathrm{O}(k)\end{pmatrix}, \quad\textrm{
hence } \det M_{\mathfrak{r}}(k)=\mathrm{O}(k)\neq1.$$ We get a
contradiction. It proves that the matrix $A=0$. From this and from
(\ref{VM^-1}) we get
$$V_{\mathfrak{r}}M_{\mathfrak{r}}^{-1}=(1,1).$$

If $\widetilde V_{\mathfrak{r}}(x,t;k)$ is another solution of the
vector RH problem from the section \ref{Sect_RH_problem_1}, then
$$V_{\mathfrak{r}}M_{\mathfrak{r}}^{-1}=(1,1)=\widetilde{V_{\mathfrak{r}}}M_{\mathfrak{r}}^{-1}$$
and, taking into account $\det M_{\mathfrak{r}}=1$, we have
$V_{\mathfrak{r}}=\widetilde{V_{\mathfrak{r}}},$ which is what had
to be proved.

\subsection{Uniqueness of general vector RH problem.}

As to the RH problem stated by the left scattering data, we can
repeat the procedure of the previous subsection, and construct the
matrix-valued function $M_{\mathfrak{l}}$. The distinction is that
pole behavior at $z=0$ involves all 4 elements of the
corresponding matrix RH problem solution. It provides
non-uniqueness of the non-regular matrix RH problem solution.
Indeed, the RH problem now is stated on the real line, and
satisfies Schwartz symmetry principle; this means existence of a
regular solution to the RH problem. However, we still have the
uniqueness of the vector RH problem. This is a general fact, and
we prove it in a general form.

\begin{lem}\label{lemma uniqueness RH problem}
Suppose $\Sigma\in\mathbb{C}$ is a smooth contour, with possible
self-intersections. The following RH problem has a unique
solution: to find a vector-valued function $V(x,t;z)$ that
satisfies the following properties
\begin{enumerate}
\item $V(y,t;.)$ is holomorphic in $\mathbb{C}\setminus\Sigma$ and
continuous up to the boundary; \item $V^-(x,t;z)=V^+(x,t;z)
J(y,t;z), z\in \Sigma,$ where
\begin{equation}\nonumber
J(y,t;z)=\e^{-\i z y\sigma_3}J_0(t;z)\e^{\i z
y\sigma_3}\equiv\e^{-\i z
y\sigma_3}\begin{pmatrix}J_{11}(t;z)&J_{12}(t;z)\\J_{21}(t;z)&J_{22}(t;z)\end{pmatrix}\e^{\i
z y\sigma_3}.
\end{equation}
$J_0(t;z)$ is a continuous function on the contour $\Sigma$ and
$\det J_0(t;z)\equiv 1;$
\item Symmetry relations:\\
\begin{equation}\label{symmetries M 2}
\overline{V(y,t;\overline{z})}=V(y,t;-z)=V(y,t;z)\begin{pmatrix}0&1\\1&0\end{pmatrix},
\end{equation}
\[\overline{V(y,t;-\overline{z})}=V(y,t;z);\]

\item $V(y,t;z)\rightarrow
\begin{pmatrix}1&1\end{pmatrix} $ as $z\rightarrow\infty.$
\end{enumerate}

\end{lem}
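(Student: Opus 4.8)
The plan is to run the classical two-solution comparison argument, keeping careful track of the symmetry. Suppose $V(y,t;z)$ and $\widetilde V(y,t;z)$ both solve the stated problem, and form the $2\times2$ matrix
\[
\mathcal N(y,t;z)=\begin{pmatrix}V_1(y,t;z)&V_2(y,t;z)\\ \widetilde V_1(y,t;z)&\widetilde V_2(y,t;z)\end{pmatrix}.
\]
Each row obeys $(\,\cdot\,)^-=(\,\cdot\,)^+J(y,t;z)$ on $\Sigma$, so $\mathcal N^-=\mathcal N^+J$; since $\det J=\det\bigl(\e^{-\i zy\sigma_3}J_0(t;z)\e^{\i zy\sigma_3}\bigr)=\det J_0\equiv1$, the scalar $\det\mathcal N$ has no jump across $\Sigma$. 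Being holomorphic in $\mathbb C\setminus\Sigma$ and continuous up to $\Sigma$, it extends to an entire function; by property~4 both rows tend to $(1,1)$, hence $\det\mathcal N(z)\to\det\begin{pmatrix}1&1\\1&1\end{pmatrix}=0$ as $z\to\infty$, and Liouville's theorem yields $\det\mathcal N\equiv0$.

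Consequently $V$ and $\widetilde V$ are proportional at every point, $\widetilde V(y,t;z)=\nu(z)V(y,t;z)$, where $\nu$ is defined locally as $\widetilde V_1/V_1$ (where $V_1\neq0$) or as $\widetilde V_2/V_2$ (where $V_2\neq0$); these agree since $\det\mathcal N\equiv0$, so $\nu$ is meromorphic off $\Sigma$ with possible poles only at the common zeros of $V_1$ and $V_2$. From $\widetilde V^\pm=\nu^\pm V^\pm$, $\widetilde V^-=\widetilde V^+J$ and $V^-=V^+J$ one gets $(\nu^--\nu^+)V^+J=0$, and since $J$ is invertible and $V\not\equiv0$, $\nu^-=\nu^+$ on $\Sigma$. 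Thus $\nu$ is meromorphic on all of $\mathbb C$, and since $\nu(\infty)=1$ it is a rational function. Feeding $\widetilde V=\nu V$ into the symmetry relations of property~3 — which hold verbatim for $V$ as well — gives $\nu(-z)=\nu(z)$ (from the first relation) and $\nu(\overline z)=\overline{\nu(z)}$ (from the conjugation relation), so $\nu$ is an even rational function, real on $\mathbb R$, with $\nu(\infty)=1$.

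It remains to show $\nu\equiv1$, i.e. that $V_1$ and $V_2$ have no common zero, and this is the main obstacle. By the two symmetry relations a common zero $z_0\notin\Sigma$ would come with the whole orbit $\{-z_0,\overline{z_0},-\overline{z_0}\}$, so the pole divisor of $\nu$ is invariant under $z\mapsto\pm z$ and $z\mapsto\pm\overline z$. One rules such poles out by analysing the scalar $P(z):=V_1(y,t;z)V_2(y,t;z)$: it is even, obeys $P(\overline z)=\overline{P(z)}$ (immediate from $\overline{V_1(\overline z)}=V_2(z)$, $\overline{V_2(\overline z)}=V_1(z)$), tends to $1$ at infinity, and, together with $V_1^2$ and $V_2^2$, solves the linear $3\times3$ Riemann--Hilbert problem induced by $J$; tracing the zeros of $P$ through that problem and the Schwarz symmetry forces $P$ to be nowhere zero, so $\nu$ has no poles. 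A bounded entire function equal to $1$ at infinity is identically $1$, hence $\widetilde V=V$. I expect this verification that the two components never vanish simultaneously to be the only genuinely delicate step; all the rest is bookkeeping with determinants, analytic continuation, and the symmetry.
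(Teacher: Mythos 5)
Your opening moves are fine: forming $\mathcal N$ from the two solutions, using $\det J\equiv1$ and Liouville to get $\det\mathcal N\equiv0$, and concluding $\widetilde V=\nu V$ with $\nu$ rational, even, real-symmetric and $\nu(\infty)=1$ is all correct bookkeeping. But the proof then stands or falls on showing that $\nu$ has no poles, i.e.\ that $V_1$ and $V_2$ have no common zero, and at exactly this point — which you yourself flag as the only delicate step — you give no argument. The sentence about the ``$3\times3$ Riemann--Hilbert problem for $V_1^2$, $V_1V_2$, $V_2^2$'' and ``tracing the zeros of $P$'' is an intention, not a proof: nothing in the hypotheses you have used so far (continuous jump of determinant one, the symmetries, normalization at infinity) prevents $V_1$ and $V_2$ from vanishing simultaneously at a symmetric orbit of points, and the Schwarz symmetry alone does not force $P=V_1V_2$ to be nonvanishing.

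The tell is that your argument never uses the one genuinely structural hypothesis of the lemma, namely that $J(y,t;z)=\e^{-\i zy\sigma_3}J_0(t;z)\e^{\i zy\sigma_3}$ with $J_0$ \emph{independent of} $y$. The paper's proof hinges on it: differentiating the jump relation in $y$ shows that $\frac{1}{\pm\i z}\bigl(V_{1,2}\,\e^{\pm\i yz}\bigr)_y$ satisfy the same jump as $V_{1,2}\,\e^{\pm\i yz}$, which allows one to assemble from the single vector solution a \emph{matrix} solution
\[
M(y,t;z)=\begin{pmatrix}V_1-\tfrac{1}{2\i z}V_{1,y} & -\tfrac{1}{2\i z}V_{2,y}\\[2pt]
\tfrac{1}{2\i z}V_{1,y} & V_2-\tfrac{1}{2\i z}V_{2,y}\end{pmatrix},
\]
whose determinant is shown (via the symmetries and Liouville) to be identically $1$, i.e.
\[
\frac{V_1V_{2,y}-V_2V_{1,y}}{2\i z}+V_1V_2=1 .
\]
This identity is precisely the missing ingredient in your scheme: at a common zero $z_0\neq0$ of $V_1(y,t;\cdot)$ and $V_2(y,t;\cdot)$ its left-hand side would vanish, contradicting the value $1$; and in the paper it is then used together with $\widetilde V M^{-1}=\bigl(1-\tfrac{b}{\i z},\,1+\tfrac{b}{\i z}\bigr)$ to force $b=0$ and $\widetilde V=V$. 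Unless you supply an actual argument ruling out common zeros — and I see no way to do so without exploiting the $y$-structure, since your steps would otherwise prove a more general statement for arbitrary unit-determinant jumps, which is not available — the conclusion $\nu\equiv1$ is unsupported and the proof has a genuine gap.
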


\begin{proof}
Suppose $V(y,t;z)=(V_1(y,t;z),V_2(y,t;z))$ satisfies the RH
problem stated above. The jump relation 2
\begin{equation}\nonumber
V_1^-=V_1^+ J_{11}+V_2^+ \e^{2\i y z} J_{21},\qquad V_2^-=V_1^+
\e^{-2\i y z}J_{12}+V_2^+ J_{22},
\end{equation}
can be rewritten as analogues of scattering relations between Jost
solutions:
\begin{equation}\label{relations V 1 2}
V_1^- \e^{-\i y z}=V_1^+\e^{-\i y z} J_{11}+V_2^+ \e^{\i y
z}J_{21},\qquad V_2^-\e^{\i y z}=V_1^+ \e^{-\i y
z}J_{12}+V_2^+\e^{\i y z} J_{22},
\end{equation}
Differentiating (\ref{relations V 1 2}) by $y$, we get

\begin{eqnarray}\label{relations V derivatives 1 2}
(V_1^- \e^{-\i y z})_y=(V_1^+\e^{-\i y z})_y J_{11}+(V_2^+ \e^{\i
y z})_y J_{21},\\ (V_2^-\e^{\i y z})_y=(V_1^+ \e^{-\i y z})_y
J_{12}+(V_2^+\e^{\i y z})_y J_{22}.
\end{eqnarray}
Dividing the last two equations by $\i z$, we get that functions
$\frac{1}{\pm \i z}(V_{1,2}\e^{\pm i yz})_y$ satisfy the same jump
relations and asymptotics for large $z$ as the functions
$V_{1,2}\e^{\pm i yz}$. Taking then half-sums and half-difference
between these functions, using (\ref{relations V derivatives 1 2})
and (\ref{relations V 1 2}), and then by multiplying them by
$\e^{\pm\i y z}$, we organize these functions as a matrix
\begin{equation}{\label{V to M matrix}}
M(y,t;z)=\begin{pmatrix}V_1-\frac{1}{2\i z}(V_1)_y & -\frac{1}{2\i
z}(V_2)_y\\
\frac{1}{2\i z}(V_1)_y & V_2-\frac{1}{2\i z}(V_1)_y \end{pmatrix}.
\end{equation}

Function $M(y,t;z)$ satisfies the same jump conditions as
$V(y,t;z),$ symmetry conditions (\ref{symmetries M 1 matrix}), and
asymptotics (\ref{asymptotics M 1 matrix}). As $\det J=1,$ the
determinant of $M$ does not have any jump across $\Sigma$.
Furthermore, it is a meromorphic function with at most a simple
pole at $z=0$: indeed, it equals \begin{equation}\label{det M}\det
M=\dsfrac{V_{1}V_{2,\,y}-V_{2}V_{1,\,y}}{2\i
z}+V_1V_2.\end{equation} Due to symmetry properties
(\ref{symmetries M 1 matrix}) it $\det M$ be represented as $$\det
M=1+\dsfrac{b}{\i z}, \quad b(y,t)\in\mathbb{R}.$$ But due to
properties (\ref{symmetries M 1 matrix}) we conclude that $\det
M(z)=\det M(-z),$ so $b\equiv 0$ and $\det M=1.$ Due to (\ref{det
M}) we get
\begin{equation}\label{det M a}\dsfrac{V_{1}V_{2,\,y}-V_{2}V_{1,\,y}}{2\i
z}+V_1V_2=1.\end{equation}

Suppose $\widetilde V=(\widetilde V_1, \widetilde V_2)$ is another
solutions of the vector RH problem stated in Lemma \ref{lemma
uniqueness RH problem}. Then by dividing it from the right on $M$,
we see that the quotient does not have any jump across $\Sigma$,
so it meromorphic with at most a simple pole at $z=0,$
$$\widetilde V M^{-1}=\(1-\frac{b}{\i z}, 1+\frac{b}{\i z}\),\quad b=b(x,t)\in\mathbb{R}.$$

Next, multiply the above relationship by $M$ from the right; we
get
\begin{equation}(\widetilde V_1, \widetilde V_2)=\(1-\frac{b}{\i z}, 1+\frac{b}{\i z}\)
\begin{pmatrix}V_1-\frac{1}{2\i z}V_{1,\,y} & -\frac{1}{2\i
z}V_{2,\,y}\\
\frac{1}{2\i z}V_{1,\,y} & V_{2}-\frac{1}{2\i z}V_{1,\,y}
\end{pmatrix}=
\end{equation}
$$=\(\dsfrac{b V_{1,\,y}}{-z^2}+\dsfrac{bV_1}{-\i z}+V_1,
\dsfrac{b V_{2,\,y}}{-z^2}+\dsfrac{bV_2}{\i z}+V_2\).$$ We recall
that $\widetilde V$ is a bounded function at $z=0$. Suppose first
that $b\neq 0.$ Then the vector $$\(\dsfrac{
V_{1,\,y}}{-z^2}+\dsfrac{V_1}{-\i z}, \dsfrac{
V_{2,\,y}}{-z^2}+\dsfrac{V_2}{\i z}\)$$ is also bounded at $z=0.$
Multiply the first element by $V_2$, the second by $-V_1$, and add
them. By using (\ref{det M a}), we get $$ -V_2\(\dsfrac{
V_{1,\,y}}{-z^2}+\dsfrac{V_1}{-\i z}\)+V_1\(\dsfrac{
V_{2,\,y}}{-z^2}+\dsfrac{V_2}{\i z}\)=\dsfrac{-2}{\i z},$$ we come
to contradiction, that is why $b\equiv 0.$ So, $\widetilde
V=(1,1)M=V.$ That is what had to be proved.

\end{proof}

\textbf{Acknowledgment.} The research has been supported by the
project "Support of inter-sectoral mobility of and quality
enhancement of research teams at Czech Technical University in
Prague", CZ.1.07/2.3.00/30.0034, sponsored by European Social Fund
in the Czech Republic.

The author would like to express his gratitude to I.Egorova for
useful advices concerning the direction of investigations, to
D.Shepelsky for introduction in the subject of the Camassa -- Holm
equation.

\bibliographystyle{elsarticle-num}
\bibliography{<your-bib-database>}







\end{document}